\newtheorem{definition}{\bf Definition}[section]
\newtheorem{remark}{\bf Remark}[section]
\newtheorem{lemma}{\bf Lemma}[section]
\newtheorem{theorem}{\bf Theorem}[section]
\newtheorem{corollary}{\bf Corollary}[section]
\newtheorem{proof}{\bf Proof}
\begin{document}

\begin{frontmatter}

\title{Locally Differentially Private Multi-Sensor Fusion Estimation With System Intrinsic Randomness
\thanksref{footnoteinfo}} 

\thanks[footnoteinfo]{\emph{Corresponding author: Hailong Huang.}}

\author[polyu]{Xinhao Yan} \ead{xin-hao-shawn.yan@connect.polyu.hk},
\author[zjut]{Bo Chen} \ead{bchen@aliyun.com},
\author[polyu]{Hailong Huang} \ead{hailong.huang@polyu.edu.hk}

\address[polyu]{Department of Aeronautical and Aviation Engineering, The Hong Kong Polytechnic University, Kowloon, Hong Kong}
\address[zjut]{Department of Automation, Zhejiang University of Technology, Hangzhou 310023, PR China}

\begin{abstract}
This paper focuses on the privacy-preserving multi-sensor fusion estimation (MSFE) problem with differential privacy considerations.
Most existing research efforts are directed towards the exploration of traditional differential privacy, also referred to as centralized differential privacy (CDP).
It is important to note that CDP is tailored to protect the privacy of statistical data at fusion center such as averages and sums rather than individual data at sensors, which renders it inappropriate for MSFE.
Additionally, the definitions and assumptions of CDP are primarily applicable for large-scale systems that require statistical results mentioned above.
Therefore, to address these limitations, this paper introduces a more recent advancement known as \emph{local differential privacy (LDP)} to enhance the privacy of MSFE.
We provide some rigorous definitions about LDP based on the intrinsic properties of MSFE rather than directly presenting the assumptions under CDP.
Subsequently, the LDP is proved to be realized with system intrinsic randomness, which is useful and has never been considered before.
Furthermore, the Gaussian mechanism is designed when the intrinsic randomness is insufficient.
The lower bound of the covariance for extra injected Gaussian noises is determined by integrating system information with privacy budgets.
Moreover, the optimal fusion estimators under intrinsic and extra disturbances are respectively designed in the linear minimum variance sense.
Finally, the effectiveness of the proposed methods is verified through numerical simulations, encompassing both one-dimensional and high-dimensional scenarios.
\end{abstract}

\begin{keyword}
Multi-sensor fusion estimation; Eavesdropping attacks; Local differential privacy; System intrinsic randomness; Gaussian mechanism
\end{keyword}

\end{frontmatter}

\section{Introduction}\label{Section:Introduction}
Cyber-physical systems (CPSs) represent sophisticated engineering systems that seamlessly coordinate, regulate, and integrate perception, communication, computation components \cite{Liu_CPS}.
Multi-sensor fusion estimation (MSFE) stands as a pivotal concern within CPSs, enabling precise monitoring through the amalgamation of valid information sourced from diverse sensors \cite{Chen_CPS}.
However, due to the increasing openness of CPSs, MSFE becomes susceptible to a spectrum of cyberattacks, such as denial of service attacks \cite{Tan_DoS}, false data injection attacks \cite{Zhao_FDI}, and eavesdropping attacks \cite{XinhaoYan_Survey_TASE}.
Notice that eavesdropping attacks pose a significant threat by surreptitiously intercepting communications in CPSs, thereby facilitating the leakage of private data and precipitating more severe active cyberattacks.
Specifically, the state estimates can be utilized by the eavesdroppers to identify models and reconstruct states with the intention of inferring sensitive information and even launching other destructive attacks \cite{XinhaoYan_Survey_TASE}.
Consequently, it is imperative to safeguard the privacy of MSFE to prevent eavesdroppers from accurate state estimation.

In recent years, numerous privacy-preserving methods have been discussed, including cryptography \cite{Darup_HE,Lu_HE}, transmission scheduling \cite{Tsiamis_Code_TAC,Kennedy_Code}, and differential privacy (DP) \cite{Dwork_DP_Book}.
Cryptographic approaches refer to the strategies of utilizing sophisticated secret keys to enhance the security against unauthorized decoding.
Specifically, homomorphic encryption is a cutting-edge technique enables computations to be performed directly on ciphertexts \cite{Darup_HE}.
Transmission scheduling can effectively obscure the original message by strategically introducing packet drops in the communication link \cite{Tsiamis_Code_TAC}, rendering it extremely challenging for eavesdroppers to gather the complete transmitted data.
Traditional DP, also known as centralized differential privacy (CDP), is a classical and well-established privacy-preserving approach stemming from database research \cite{Dwork_DP_Book}.
Its primary objective is to safeguard statistical information through introducing random perturbations.

As is known, cryptographic approaches often necessitate substantial computational resources to ensure high privacy level, while the effectiveness of the transmission scheduling depends on assumptions of the physical links.
By contrast, DP requires minor computation resources and can counter any kinds of eavesdroppers when compared with other encryption methods \cite{XinhaoYan_Survey_TASE}.
Therefore, it has been extensively discussed across numerous fields, such as risk minimization \cite{Liu_DP_Risk}, resilient consensus \cite{Fiore_DP_Consensus,Wang_DP_Consensus}, and linear-quadratic control \cite{Yazdani_DP_Control}.
Regarding state estimation, a considerable number of other estimators were discussed alongside DP, including multi-input multi-output filter \cite{Ny_DP_MIMO}, square root unscented Kalman filter \cite{Yuan_DP}, and set-based estimator \cite{Dawoud_DP}.
For example, an optimized truncated Laplace noise was adopted in \cite{Dawoud_DP}, and this type of noise helped anonymize the centers of the estimated zonotopes in the set-based estimator.

Especially for standard Kalman filtering structure, the differentially private Kalman filter was first proposed in \cite{Ny_DP_TAC}, where the statistical queries about local estimates were perturbed by adding uncertainties, specifically by injecting random noises to either the system inputs or outputs.
This increased the uncertainty associated with the statistical information, thereby enhancing privacy.
For the input mechanism, some parameters were further optimized by solving a semi-definite program in \cite{Degue_DP_TAC}.
Moreover, building upon the above architecture, the MSFE with CDP was explored in \cite{XinhaoYan_DP_TAES,XinhaoYan_DP_Auto}, where fusion center (FC) calculated, aggregated, and transmitted the query outputs about perturbed fusion estimates.
Direct independent noise addition \cite{XinhaoYan_DP_TAES} and two-step sequential noise injection \cite{XinhaoYan_DP_Auto} were respectively proposed for MSFE, and the optimal fusion criteria were both computed in the linear minimum variance sense.
Also, based on the above differentially private distributed fusion estimation structure \cite{XinhaoYan_DP_TAES}, an event-triggered methodology was implemented to minimize energy and computational consumption \cite{Liang_DP}.

However, there exist some inherent limitations to the MSFE when equipped CDP. 
First, the assumptions regarding the sensitivity of CDP is not adequately rational for MSFE.
This is because the sensitivity in CDP is built on the adjacency relation between the original state and the adjacent state, while the latter one only makes sense for large-scale systems, and so does the privacy metric of CDP.
Second, the CDP-based structure introduces an additional communication layer following the FC.
The FC is tasked with broadcasting statistical information to other parties, while such a three-layer architecture is not common for MSFE.
Therefore, to enhance the feasibility and rationality, we consider an recently advanced technique called \emph{local differential privacy (LDP)} for MSFE, and the contribution of this paper can be summarized as follows:
\begin{enumerate}
\item 
It is the first time that LDP is integrated into the MSFE framework to enhance estimation privacy. 
Some rigorous definitions of LDP are proposed, grounded in the intrinsic properties of MSFE, and thus moving away from the irrational assumptions about adjacent states in MSFE with CDP.
Specifically, these include a privacy metric for local state estimates (LSEs) and a sensitivity measure for estimation error covariances.
\item 
Novel differentially private mechanisms without any extra perturbations are proposed for LDP-based MSFE.
Specifically, LDP is proved to be naturally achieved with system intrinsic randomness, provided that specific conditions related to estimation error covariances are met.
Also, Gaussian mechanisms are designed for LDP-based MSFE when the intrinsic condition is not satisfied,
The lower bounds of extra injected noises are designed with the proposed sensitivity and privacy budgets.
\item 
All the mechanisms have been creatively designed to address the specific scenario, where the mathematical expectations for the outputs of the mechanisms are same while the covariances differ.
These approaches represent a significant departure from conventional DP involving different expectations and same covariances.
They offer a fresh perspective and enhanced flexibility for MSFE and even for all unbiased estimation structures.
Finally, leveraging the properties of both intrinsic and external randomness, we propose optimal fusion estimators by minimizing the mean square errors.
\end{enumerate}

The remainder of this paper is organized as follows. 
Section 2 discusses the modelling of the fusion estimator for a CPS.
Meanwhile, the privacy concerns with LDP and the problem formulation are proposed.
In Section 3, we propose some novel definitions about LDP and provide some lemmas to facilitate the subsequent derivation.
Then, one-dimension and high-dimensional intrinsic mechanisms are proposed and the privacy metric is proved to be achieved under certain conditions.
Next, Gaussian mechanisms are designed in Section 4 for both one-dimensional and high-dimensional cases when the above mentioned conditions are not satisfied.
Section 5 considers the simulation on both a one-dimensional oxygen content system and a high-dimensional target tracking system to demonstrate the effectiveness of the proposed methods.
Some intuitive figures about probability distributions and estimation performances are provided.
Finally, the conclusion of this work is given in Section 6.
Besides, the notations that are frequently used throughout the paper are summarized below.

\noindent\textbf{Notations}:
$\mathbb{R}^{n}$ means the set of $n$-dimensional real vectors and $\mathbb{R}^{n\times m}$ means the set of $n\times m$ real matrices.
The notation ``$I_{n}$'' indicates the identity matrix with dimension $n$. Similarly, ``$0_{m \times n}$'' means the zero matrix with dimension $m \times n$.
The superscript ``${\mathrm{T}}$'' is utilized to stand for the transpose of a matrix.
$\mathrm{diag}\{a_{1},\dots,a_{n}\}$ represents a block diagonal matrix and $\mathrm{col}\{a_{1},\dots,a_{n}\}=[a_{1};\dots;a_{n}]=[a_{1}^{\mathrm{T}}\ \dots\ a_{n}^{\mathrm{T}}]^{\mathrm{T}}$ 
represents a column vector, whose elements are ${a_1},\dots,{a_n}$.
${\mathrm{tr}}\left\{\cdot\right\}$ represents the trace of matrix. 
${\mathrm{rank}}\left\{\cdot\right\}$ denotes the rank of matrix. 
$X>(<)0$ means a positive-definite (negative-definite) matrix, while $X\geq(\leq)0$ means a non-negative definite (non-positive definite) matrix.
The notation $|\cdot|$ stands for the absolute value of a scalar, and $\|\cdot\|_{2}$ denotes the $2$-norm of a matrix. 
${\mathbb{E}}\left\{\cdot\right\}$ denotes the mathematical expectation, and ${\mathbb{P}}\left\{\cdot\right\}$ denotes the probability of a random event.
The $n$-dimensional Gaussian distribution for a random vector $X$ with mean $x$ and covariance $P$ is expressed as $\mathcal{N}(X;x,P)=\frac{1}{2\pi^{\frac{n}{2}}|P|^{\frac{1}{2}}}\exp\left(-\frac{1}{2}(X-x)^{\mathrm{T}}P^{-1}(X-x)\right)$.

\section{Problem Formulation} 
Consider a CPS described by the following linear time-invariant state-space model:
\begin{equation}                       
\begin{aligned}
\label{StateSpace}
    &x_{k+1}=Ax_{k}+Bw_{k+1}, \\
    &y_{i,k}=C_{i}x_{k}+D_{i}v_{i,k}\ (i=1,\cdots,L),
\end{aligned}
\end{equation}
where $x_{k}\in\mathbb{R}^{n_{x}}$ denotes the system state at time $k\in\mathbb{Z}_{+}$ and $y_{i,k}\in\mathbb{R}^{n_{y_{i}}}$ represents the measurement of the $i$-th sensor.
Matrices $A\in\mathbb{R}^{n_{x}\times n_{x}}$, $B\in\mathbb{R}^{n_{x}\times n_{w}}$, $C_{i}\in\mathbb{R}^{n_{y_{i}}\times n_{x}}$, and $D_{i}\in\mathbb{R}^{n_{y_{i}}\times n_{v_{i}}}$ are time invariant.
The system noise $w_{k}\in\mathbb{R}^{n_{w}}$ and measurement noise $v_{i,k}\in\mathbb{R}^{n_{v_{i}}}$ are Gaussian distributed and mutually independent, satisfying
\begin{equation}                        
\begin{aligned}
\label{Noise}
    &\mathbb{E}\left\{[w_{k_{1}}^{\mathrm{T}}\  v_{i,k_{1}}^{\mathrm{T}}]^{\mathrm{T}}[w_{k_{2}}^{\mathrm {T}}\ v_{j,k_{2}}^{\mathrm {T}}]\right\}
    = \\
    &\qquad\qquad
    \delta(k_{1},k_{2})\mathrm{diag}\{Q_{w},\delta(i,j)Q_{v_{i}}\},
\end{aligned}
\end{equation}
where $Q_{w}$ and $Q_{v_{i}}\ (\forall\ i)$ are non-negative definite and stand for the respective covariances of $w_{k}$ and $v_{i,k}\ (\forall\ i)$.
Here, $\delta(k_{1},k_{2})$ is an indicator function such that $\delta(k_{1},k_{2})=1$ if $k_{1}=k_{2}$; otherwise, $\delta(k_{1},k_{2})=0$.
Also, the above system is assumed to be controllable and observable, i.e., $\mathrm{rank}\{[B\ AB\ \cdots\ A^{n_{x}-1}B]\}=n_{x}$ and $\mathrm{rank}\{[C^{\mathrm{T}}\ (CA)^{\mathrm{T}}\ \cdots\ (CA^{n_{x}-1})^{\mathrm{T}}]^{\mathrm{T}}\}=n_{x}$.

In this paper, the distributed fusion estimation structure is considered, where the sensors should calculate and transmit LSEs instead of raw measurements.
This is because LDP requires the comparison of consistent local information, which means LSEs are more suitable than measurements with different dimensions.
Specifically, the LSE at each sensor is calculated by the following Kalman filter:
\begin{equation}                      
\label{hat_x_i}
    \hat{x}_{i,k}=A\hat{x}_{i,k-1}+K_{i,k}(y_{i,k}-C_{i}A\hat{x}_{i,k-1}),
\end{equation}
where
\begin{equation}                      
\begin{aligned}
\label{K_i}
    &K_{i,k}=P_{ii,k|k-1}C_{i}^{\mathrm{T}}(C_{i}P_{ii,k|k-1}C_{i}^{\mathrm{T}}+\overline{Q}_{v_{i}})^{-1}, \\
    &P_{ii,k|k-1}=AP_{ii,k-1}A^{\mathrm{T}}+\overline{Q}_{w}, \\
    &P_{ii,k}=(I_{n_{x}}-K_{i,k}C_{i})P_{ii,k|k-1},
\end{aligned}
\end{equation}
with $\overline{Q}_{w}=BQ_{w}B^{\mathrm{T}}$ and $\overline{Q}_{v_{i}}=D_{i}Q_{v_{i}}D_{i}^{\mathrm{T}}$.
Since the Kalman filter converges to the steady state quickly, the steady-state Kalman is directly used for convenience in this paper \cite{XinhaoYan_DP_Auto}.
The steady-state prediction error covariance $\overline{P}_{ii}^{-}\triangleq\lim\limits_{t\to\infty} P_{ii,k|k-1}$ can be computed by resorting to the following discrete Riccati equation:
\begin{equation}                      
\begin{aligned}
\label{P_ii^-}
    \overline{P}_{ii}^{-}=&A\overline{P}_{ii}^{-}A^{\mathrm{T}}+\overline{Q}_{w}-A\overline{P}_{ii}^{-}C_{i}^{\mathrm{T}} \\
    &\times(C_{i}\overline{P}_{ii}^{-}C_{i}^{\mathrm{T}}+\overline{Q}_{v_{i}})^{-1}C_{i}\overline{P}_{ii}^{-}A^{\mathrm{T}}.
\end{aligned}
\end{equation}
In this case, the steady-state estimation covariance and steady-state Kalman gain can be obtained as
\begin{equation}                      
\begin{aligned}
\label{P_ii}
    &\overline{K}_{i}=\overline{P}_{ii}^{-}C_{i}^{\mathrm{T}}(C_{i}\overline{P}_{ii}^{-}C_{i}^{\mathrm{T}}+\overline{Q}_{v_{i}})^{-1}, \\
    &\overline{P}_{ii}=(I_{n_{x}}-\overline{K}_{i}C_{i})\overline{P}_{ii}^{-}.
\end{aligned}
\end{equation}

Besides, due to possible relations among sensors, the steady-state cross estimation error covariance between $i$-th and $j$-th LSEs is also required to be computed:
\begin{eqnarray}                       
\begin{aligned} 
\label{P_ij}
    \overline{P}_{ij}=&(I_{n_{x}}-\overline{K}_{i}C_{i})(A\overline{P}_{ij}A^{\mathrm{T}}+\overline{Q}_{w})(I_{n_{x}}-\overline{K}_{j}C_{j})^{\mathrm{T}}.
\end{aligned}
\end{eqnarray}
Then, after gathering all the LSEs, the distributed fusion estimate (DFE) can be computed at FC by means of weighted summation:
\begin{equation}                       
\label{hat_x_f}
    \hat{x}_{f,k}=\sum_{i=1}^{L}W_{i}\hat{x}_{i,k},
\end{equation}
where the optimal weight in the linear minimum variance sense can be calculated by \cite{XinhaoYan_DP_Auto}:
\begin{equation}                       
\begin{aligned}
\label{W}
    &W\triangleq[W_{1}\ \cdots\ W_{L}]=(I_{a}^{\mathrm{T}}\overline{P}^{-1}I_{a})^{-1}I_{a}^{\mathrm{T}}\overline{P}^{-1}, \\
    &I_{a}\triangleq [I_{n_{x}}^{{\mathrm{T}}}\cdots I_{n_{x}}^{\mathrm{T}}]^{\mathrm{T}}\in \mathbb{R}^{n_{x}L\times n_{x}},
    \overline{P}\triangleq (\overline{P}_{ij})_{n_{x}L\times n_{x}L}.
\end{aligned}
\end{equation}


Due to the potential eavesdroppers in the CPS, we introduce the technique called LDP to protect estimation privacy in the above MSFE system.
In the area of database, the original privacy metric of LDP can be described as $\mathbb{P}(M(d_{i})\in S)\leq e^{\varepsilon}\mathbb{P}(M(d_{j})\in S)$ \cite{Dwork_DP_Book}.
This metric depicts the relation between two data elements under certain privacy budgets.
Corresponding to such a definition, the specific metric for MSFE can be constructed with two LSEs, and the detailed definition is elaborated as follows.

\begin{definition}[($\varepsilon,\delta$)-Local Differential Privacy]
Given privacy budgets $\varepsilon>0$ and $0<\delta<1$, a perturbation mechanism $M:\mathbb{R}^{n_{x}}\to \mathbb{R}^{n_{x}}$ preserves $(\varepsilon,\delta)$-LDP if for all $X_{k}\in\mathrm{Range}\{M\}$ and for all $\hat{x}_{i,k},\ \hat{x}_{j,k}\in\mathbb{R}^{n_{x}}$, we have
\begin{equation}                   
\begin{aligned}
\label{LDP}
    \mathbb{P}(M(\hat{x}_{i,k})=X_{k})\leq e^{\varepsilon}\mathbb{P}(M(\hat{x}_{j,k})=X_{k})+\delta.
\end{aligned}
\end{equation}
\end{definition}

In this metric, $M(\cdot)$ is a randomized mechanism that acts on LSEs and $(\varepsilon,\delta)$ are privacy budgets.
The output of the mechanism is defined as perturbed LSE (PLSE), and the general expression is described by
\begin{equation}                    
    \hat{x}_{i,k}^{p}\triangleq M(\hat{x}_{i,k}).
\end{equation}
Based on these PLSEs, the perturbed DFE (PDFE) $\hat{x}_{f,k}^{p}$ in the linear minimum variance sense is required to be calculated in the similar way of \eqref{hat_x_f} as
\begin{equation}                    
\label{hat_x_f_p}
    \hat{x}_{f,k}^{p}=\sum_{i=1}^{L}W_{i}^{p}\hat{x}_{i,k}^{p},
\end{equation}
where $W^{p}\triangleq[W_{1}^{p}\ \cdots\ W_{L}^{p}]=(I_{a}^{\mathrm{T}}(\overline{P}^{p})^{-1}I_{a})^{-1}I_{a}^{\mathrm{T}}(\overline{P}^{p})^{-1}$.
Since the weights satisfy $\sum_{i=1}^{L}W_{i}^{p}=I_{n_{x}}$, the estimation error covariance of PDFE can be expressed as $P_{f,k}^{p}=\sum_{i=1}^{L}\sum_{j=1}^{L}W_{i}^{p}\tilde{x}_{i,k}^{p}(\tilde{x}_{j,k}^{p})^{\mathrm{T}}(W_{j}^{p})^{\mathrm{T}}$, where $\tilde{x}_{i,k}^{p}\triangleq x_{k}-\hat{x}_{i,k}^{p}$ represents the estimation error of PLSE.

Within the framework of MSFE outlined above, one of the most pivotal objectives is to devise effective mechanisms $M(\cdot)$, aimed at achieving the target of LDP in \eqref{LDP}.
Nonetheless, it is imperative to acknowledge that the presence of system noise introduces an element of randomness into the LSEs.
Consequently, the impact of these system intrinsic uncertainties should be thoroughly considered and discussed in the context of realizing LDP.
Moreover, it is evident that extra perturbation mechanisms will inevitably degrade the estimation performance.
Hence, we also need to minimize the trace of the fusion estimation error covariance $P_{f,k}^{p}$ under the predefined privacy budgets $\varepsilon$ and $\delta$. 
The locally differentially private MSFE structure in this paper is illustrated in Fig. \ref{Fig_Structure}.
Meanwhile, the comparison between the proposed structure with LDP and that with CDP is also shown in the figure, and the detailed differences are discussed in Remark \ref{remark_comparison}.

\begin{figure}[t]         
    \centering
    \includegraphics[width=\columnwidth]{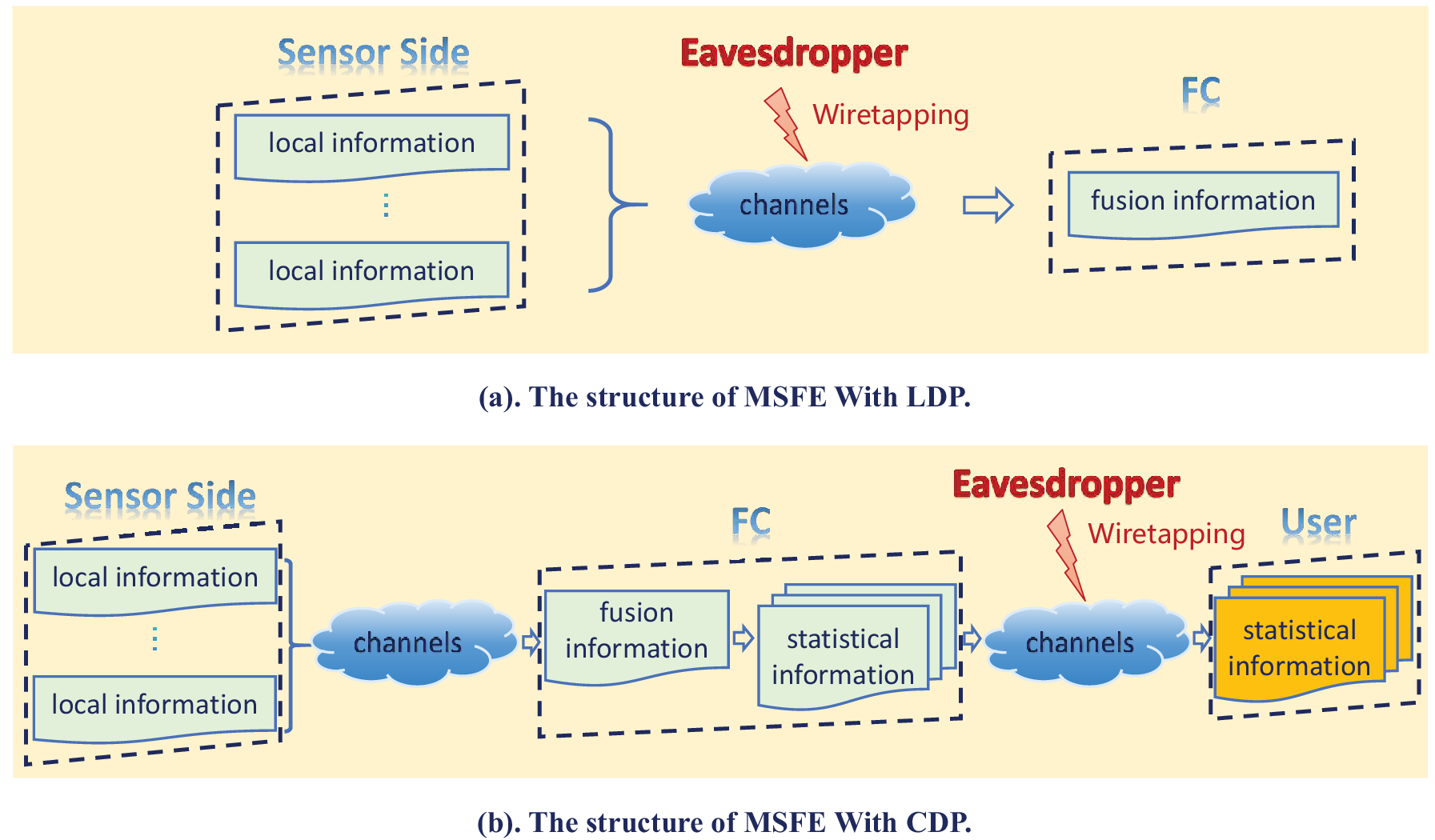}
    \caption{The comparison for MSFEs with LDP and CDP.}
\label{Fig_Structure}
\end{figure}



\begin{remark}
\label{remark_same}
In the realm of privacy-preserving estimation for CPSs, FCs or users typically possesses more information than potential eavesdroppers, including fusion criteria \cite{XinhaoYan_DP_Auto} and acknowledgements \cite{Kennedy_Code}.
However, it is crucial to acknowledge that in certain specific scenarios, eavesdroppers may be able to masquerade as legitimate users. 
In such cases, the information accessible to both the user and the eavesdropper becomes identical, rendering traditional encryption methods utilizing secret keys ineffective.
By contrast, the perturbation mechanism of LDP introduces true-random noises, rather than relying on pseudo-random numbers or other synchronized keys.
This ensures that both users and eavesdroppers receive perturbed data, thereby making it difficult for all the parties to discern original data. 
Consequently, the MSFE combined with LDP provides robust protection against various types of eavesdroppers, significantly elevating the privacy level.
\end{remark}



\begin{remark}
\label{remark_comparison}
As previously discussed, the Kalman filtering methods incorporating CDP have been introduced in \cite{Ny_DP_TAC,Degue_DP_TAC,XinhaoYan_DP_TAES,XinhaoYan_DP_Auto}.
Notably, the input or local perturbation of CDP shares certain similarities with the proposed LDP structure in this paper.
Hence, the following detailed discussion is provided to elucidate the distinctions between them:
\begin{enumerate}
\item \emph{The system dynamics are different.}
The estimation structure with CDP is tailored for large-scale systems, where the statistical information is essential.
Concretely, the fusion estimator or aggregator with CDP is responsible for calculating or transmitting the statistical data, for example, $z_{k}=\sum_{i=1}^{m}\hat{x}_{i,k}$.
Such a design of aggregation aligns with the objective of CDP, which is to protect the individual in statistical information.
By contrast, LDP can be suitable for all multi-party systems, because it does not require statistical queries.
\item \emph{The communication structures are different.}
Generally, the estimation system with CDP comprises three communication layers, where fusion estimator or aggregator will broadcast statistical data to additional parties, and eavesdroppers have the opportunity to intercept this transmission and infer system states.
This layer is typical in database systems while not for general MSFE systems in which the fusion information is exclusively stored at FC and will not be broadcast.
Hence, the perturbation mechanism on LSEs with LDP is more suitable for MSFE systems.
\end{enumerate}
\end{remark}

\section{Intrinsic Mechanism for LDP}
To realize the objective of LDP as defined in \eqref{LDP}, it is necessary to establish the sensitivity between two sensors at the outset.
Traditional CDP in \cite{Ny_DP_TAC,Degue_DP_TAC,XinhaoYan_DP_TAES,XinhaoYan_DP_Auto} considers the sensitivity between two adjacent estimates, which can be expressed as $S\triangleq\|\hat{x}_{k}-\hat{x}_{k}'\|_{2}$.
However, the assumption for the existence of $\hat{x}_{k}'$ is not applicable in the context of MSFE in this paper.
Therefore, we propose an alternative definition of sensitivity that relies on the estimation error covariances, which are readily accessible in MSFE systems.


\begin{definition}[$l_{2}$-Sensitivity]
\label{definition_sensitivity}
The $l_{2}$-sensitivity for MSFE means the maximum of the $2$-norm between any two estimation error covariances, i.e.,
\begin{equation}                    
\label{Delta_2}
    \Delta_{2}
    \triangleq
    \max_{i,j=1,\cdots,L}\|\Delta_{ij}\|_{2}
    \triangleq
    \max_{i,j=1,\cdots,L}\|\overline{P}_{ii}-\overline{P}_{jj}\|_{2}.
\end{equation}
\end{definition}

In other words, it depicts the largest distance between two estimation error covariances that can be easily obtained from the estimator design.
Furthermore, we define the extreme norms for the covariances by the following form:
\begin{equation}                    
\begin{aligned}
\label{P_min}
    \overline{P}_{\min}
    \triangleq
    \min_{i=1,\cdots,L}\|\overline{P}_{ii}\|_{2},\ 
    \overline{P}_{\max}
    \triangleq
    \max_{i=1,\cdots,L}\|\overline{P}_{ii}\|_{2}.
\end{aligned}
\end{equation}
These two extreme norms will be used in the subsequent mechanism design and they can be quickly computed offline by going through the entire spectrum of available norms.
Besides, by resorting to the reverse triangle inequality for $2$-norm of matrices, the relation between the sensitivity in \eqref{Delta_2} and extremums in \eqref{P_min} can be established as follows:
\begin{equation}                    
\begin{aligned}
\label{Delta_2_P_min}
    \Delta_{2}
    \geq
    \overline{P}_{\max}-\overline{P}_{\min}.
\end{aligned}
\end{equation}

Then, some degree of randomness should be incorporated for constructing specific probability distributions to achieve the desired target of LDP as outlined in \eqref{LDP}.
As previously discussed, there already exist some disturbances in MSFE system, including system and measurement noises, which are reflected in estimation error covariance.
It will be beneficial that we directly leverage these system intrinsic disturbances to realize LDP.
This is because with intrinsic disturbances, no additional perturbations would be needed, and thus such a design will not compromise the estimation performance.
Then, prior to presenting the theorem, we provide several pertinent lemmas as outlined below.


\begin{lemma}[Chebyshev Inequality \cite{Proakis_Chebyshev}]
\label{lemma:chebyshev}
Suppose that $x\in\mathbb{R}^{n}$ is an arbitrary random vector with mean $\mu\in\mathbb{R}^{n}$ and covariance $P\in\mathbb{R}^{n\times n}$.
For any positive number $t$, we have
\begin{equation}                      
\label{chebyshev}
    \mathbb{P}(\|x-\mu\|_{2}^{2}>t^{2})\leq \frac{\mathrm{tr}(P)}{t^{2}}.
\end{equation}
\end{lemma}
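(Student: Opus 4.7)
The plan is to reduce the multivariate inequality to the classical scalar Markov inequality applied to the non-negative random variable $Y \triangleq \|x-\mu\|_2^2$. First I would observe that $Y \geq 0$ almost surely, so Markov's inequality gives $\mathbb{P}(Y > t^2) \leq \mathbb{E}\{Y\}/t^2$ for every positive $t$. The rest of the proof then amounts to the purely algebraic identification $\mathbb{E}\{Y\} = \mathrm{tr}(P)$.

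For that identification, I would invoke the standard trace trick, writing $(x-\mu)^{\mathrm{T}}(x-\mu)$ as $\mathrm{tr}\{(x-\mu)(x-\mu)^{\mathrm{T}}\}$ by the cyclic property of the trace. Because the trace is a linear functional of the matrix entries, expectation and trace commute, and by the definition $P = \mathbb{E}\{(x-\mu)(x-\mu)^{\mathrm{T}}\}$ given in the statement, one obtains $\mathbb{E}\{Y\} = \mathrm{tr}\{P\}$. Substituting this back into the Markov bound yields \eqref{chebyshev}.

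There is no genuine obstacle here; this is a textbook derivation, and the only point that deserves a line of justification is the interchange of expectation and trace, which is immediate from linearity. The lemma is stated only to be invoked later when bounding the deviation of the perturbed local estimates from their means, so the proof can be kept to a few lines and does not need to address tightness or any dependence on higher moments of $x$.
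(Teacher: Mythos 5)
Your proposal is correct and follows essentially the same route as the paper's own proof: both apply Markov's inequality to the non-negative variable $\|x-\mu\|_{2}^{2}$ and then use linearity of the trace to identify $\mathbb{E}\{(x-\mu)^{\mathrm{T}}(x-\mu)\}=\mathrm{tr}(P)$. No gaps.
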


\begin{proof}
First, let us recall the Markov inequality.
For an arbitrary non-negative random variable $Y\in\mathbb{R}$ and for any positive number $t>0$, the following inequality holds:
\begin{equation}                   
\begin{aligned}
    \mathbb{P}(Y\geq a)\leq \frac{\mathbb{E}\{Y\}}{a}.
\end{aligned}
\end{equation}
In this case, we replace $Y$ by $\|x-\mu\|_{2}^{2}=(x-\mu)^{\mathrm{T}}(x-\mu)$, and thus the following relation is obtained:
\begin{equation}                   
\begin{aligned}
    \mathbb{E}\{Y\}
    =&
    \mathbb{E}\{(x-\mu)^{\mathrm{T}}(x-\mu)\} \\
    =&
    \mathrm{tr}(\mathbb{E}\{(x-\mu)(x-\mu)^{\mathrm{T}}\})
    =
    \mathrm{tr}(P).
\end{aligned}
\end{equation}
By substituting the above relations into the Markov inequality, the result \eqref{chebyshev} in the lemma can be obtained, and the proof is completed.
\end{proof}


\begin{lemma}[Post-Processing \cite{Dwork_DP_Book}]
\label{lemma:post}
Let $M:\mathbb{R}^{n}\to\mathcal{O}$ be a perturbation mechanism of $(\varepsilon,\delta)$-DP, and let $f:\mathcal{O}\to\mathcal{O}'$ be an arbitrary randomized mapping.
Then, $f\circ M:\mathbb{R}^{n}\to\mathcal{O}'$ is still $(\varepsilon,\delta)$-differentially private.
\end{lemma}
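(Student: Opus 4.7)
The plan is to reduce the statement to the case of a deterministic post-processing map $f$ and then lift the result to a genuinely randomized $f$ via an independent-randomization decomposition. This two-step pattern is the standard route for post-processing in differential privacy and avoids having to reason directly about the joint distribution of $M$ and $f$.

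First, I would assume that $f$ is a fixed deterministic map. For any measurable event $S'\subseteq\mathcal{O}'$, the preimage $S\triangleq f^{-1}(S')\subseteq\mathcal{O}$ is a measurable event for $M$, and the events $\{f(M(x))\in S'\}$ and $\{M(x)\in S\}$ coincide. Substituting $x=\hat{x}_{i,k}$ and $x=\hat{x}_{j,k}$ and invoking the $(\varepsilon,\delta)$-DP guarantee of $M$ at the set level immediately yields $\mathbb{P}(f(M(\hat{x}_{i,k}))\in S')\leq e^{\varepsilon}\mathbb{P}(f(M(\hat{x}_{j,k}))\in S')+\delta$, which is precisely the $(\varepsilon,\delta)$-LDP condition for $f\circ M$ asked for in Definition 2.1 (in its set-based rather than density-level formulation).

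Second, I would handle a randomized $f$ by representing it as a family $\{f_{\omega}\}_{\omega}$ of deterministic maps indexed by an auxiliary random variable $\omega$ with distribution $\mu$, drawn independently of both the input $\hat{x}_{i,k}$ and the internal noise used by $M$. Conditioning on $\omega$ and applying the deterministic bound inside the expectation gives $\mathbb{P}(f(M(\hat{x}_{i,k}))\in S')=\int\mathbb{P}(f_{\omega}(M(\hat{x}_{i,k}))\in S')\,d\mu(\omega)\leq\int\bigl(e^{\varepsilon}\mathbb{P}(f_{\omega}(M(\hat{x}_{j,k}))\in S')+\delta\bigr)\,d\mu(\omega)=e^{\varepsilon}\mathbb{P}(f(M(\hat{x}_{j,k}))\in S')+\delta$, since the constants $e^{\varepsilon}$ and $\delta$ pull through the integral.

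The step I expect to be most delicate is the randomization decomposition itself: for the conditional probabilities to factor cleanly and for $e^{\varepsilon}$ to pass through the integral, the auxiliary randomness defining $f$ must be independent of both $\hat{x}_{i,k}$ and the noise injected by $M$. Under the standing assumptions of the paper, where all perturbation noises are drawn independently of the system trajectories and of each other, this independence is automatic, and the remaining manipulations are routine change-of-measure computations with no additional regularity requirements on $\mathcal{O}$, $\mathcal{O}'$, or the covariance structures arising from the Kalman filter \eqref{hat_x_i}--\eqref{P_ij}.
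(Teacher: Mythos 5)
Your proposal is correct and follows essentially the same route as the paper's own proof: reduce to a deterministic $f$ via the preimage event $T=f^{-1}(S')$ and the $(\varepsilon,\delta)$-DP guarantee of $M$, then lift to randomized $f$ by viewing it as a mixture of deterministic maps. The only difference is cosmetic --- you write out the integral over the auxiliary randomness explicitly, whereas the paper simply asserts that a convex combination of differentially private mechanisms is differentially private.
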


\begin{proof}
Here, we prove the proposition for a deterministic function
$f:\mathcal{O}\to\mathcal{O}'$. 
The result follows because any randomized mapping can be decomposed into a convex combination of deterministic functions, and a convex combination of differentially private mechanisms is differentially private.
Fix any pair of neighboring databases $x_{i}$ and $x_{j}$.
Then, fix any event $S\subseteq\mathcal{O}'$ and let $T=\{r\in\mathcal{O}:f(r)\in S\}$.
In this case, we have:
\begin{equation}                   
\begin{aligned}
\label{postprocessing}
    \mathbb{P}(f(M(x_{i})\in S)
    =&
    \mathbb{P}(M(x_{i})\in T) \\
    \leq&
    e^{\varepsilon}\mathbb{P}(M(x_{j})\in T)+\delta \\
    =&
    e^{\varepsilon}\mathbb{P}(f(M(x_{j}))\in S)+\delta.
\end{aligned}
\end{equation}
This completes the proof.
\end{proof}

\begin{lemma}[Matrix Inverse Identity]
\label{lemma:inverse}
Given matrices $A\in\mathbb{R}^{n\times n}$ and $B\in\mathbb{R}^{n\times n}$, and suppose they are invertible.
Then, the following identity holds:
\begin{equation}                   
\begin{aligned}
\label{inverse}
    A^{-1}-B^{-1}
    =
    A^{-1}(B-A)B^{-1}.
\end{aligned}
\end{equation}
\end{lemma}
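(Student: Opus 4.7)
The plan is to prove this identity by direct algebraic manipulation of the right-hand side, exploiting the invertibility assumptions on $A$ and $B$ so that $A^{-1}A = I$ and $BB^{-1} = I$. Since no structural or analytic hypotheses beyond invertibility are needed, no inequality chain or limiting argument is required; the entire proof reduces to distributing a product and cancelling.

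First I would expand $A^{-1}(B-A)B^{-1}$ using distributivity of matrix multiplication over addition, writing it as $A^{-1}B B^{-1} - A^{-1}A B^{-1}$. Next I would apply the two cancellations $B B^{-1} = I_n$ and $A^{-1}A = I_n$, which collapses the expression to $A^{-1} - B^{-1}$, matching the left-hand side of \eqref{inverse}. Associativity of matrix multiplication is used implicitly to group the factors in this order, and it is legitimate because all inverses exist by hypothesis.

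As an alternative presentation, I could instead start from the trivial identity $B - A = B - A$ and left-multiply by $A^{-1}$ and right-multiply by $B^{-1}$ to obtain $A^{-1}(B-A)B^{-1} = A^{-1}BB^{-1} - A^{-1}AB^{-1} = B^{-1}$ minus... actually the first route is cleaner and I would prefer it for the write-up.

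Honestly there is no real obstacle here: the lemma is a textbook identity used as a convenience tool for later derivations (presumably to compare perturbed and unperturbed Kalman-style covariances). The only thing to be careful about is not to swap the order of the factors $B-A$ versus $A-B$, since matrix multiplication is non-commutative; the placement of $A^{-1}$ on the left and $B^{-1}$ on the right must match the signs to yield $A^{-1} - B^{-1}$ rather than $B^{-1} - A^{-1}$. I would therefore double-check the sign at the end of the expansion before concluding the proof.
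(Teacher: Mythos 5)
Your proposal is correct and is essentially the same argument as the paper's: the paper rewrites $A^{-1}=A^{-1}BB^{-1}$ and $B^{-1}=A^{-1}AB^{-1}$ and factors, while you expand $A^{-1}(B-A)B^{-1}$ and cancel, which is the same one-line algebraic identity read in the opposite direction. No gap.
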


\begin{proof}
Since $A$ and $B$ are invertible, we can get some relations: $A^{-1}A=AA^{-1}=I_{n}$ and $B^{-1}B=BB^{-1}=I_{n}$.
In this case, the following relations can be established: 
\begin{equation}                   
\begin{aligned}
    A^{-1}=A^{-1}I_{n}=A^{-1}BB^{-1}, \\
    B^{-1}=I_{n}B^{-1}=A^{-1}AB^{-1}.
\end{aligned}
\end{equation}
By substituting them into $A^{-1}-B^{-1}$, the result can be directly obtained.
This completes the proof.
\end{proof}

Subsequently, the theorem with system intrinsic randomness for the one-dimensional system is presented, which we refer to as the \emph{intrinsic mechanism}.
It is demonstrated that LDP is achieved based on the relation among extreme norm, sensitivity and privacy budgets.

\begin{theorem}
\label{theorem_IM}
\rm{\textbf{(One-Dimensional Intrinsic Mechanism)}}
When $n_x=1$, for arbitrary privacy budgets $\varepsilon>0$ and $0<\delta<1$, the LSEs in \eqref{hat_x_i} with estimation error covariances in \eqref{P_ii} preserves $(\varepsilon,\delta)$-LDP in \eqref{LDP} when the following condition holds:
\begin{equation}                       
\begin{aligned}
\label{theorem_IM_eq}
    \overline{P}_{\min}>\frac{\Delta_{2}\sqrt{(\delta+1)^{2}+8\varepsilon\delta}}{2\varepsilon\delta}.
\end{aligned}
\end{equation}
\end{theorem}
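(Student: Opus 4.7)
My plan is to reduce the $(\varepsilon,\delta)$-LDP requirement \eqref{LDP} to a high-probability bound on the log-ratio of two Gaussian densities and then close the argument with Chebyshev's inequality. First I would note that, conditionally on the true state $x_{k}$, the steady-state LSE $\hat{x}_{i,k}$ in \eqref{hat_x_i} is Gaussian with mean $x_{k}$ and variance $\overline{P}_{ii}$; since the intrinsic mechanism adds no extra noise, the law of $M(\hat{x}_{i,k})$ has density $f_{i}(y)=\mathcal{N}(y;x_{k},\overline{P}_{ii})$. By the standard characterization of $(\varepsilon,\delta)$-DP, the inequality \eqref{LDP} reduces to
\[
\mathbb{P}_{y\sim f_{i}}\!\left[\log\frac{f_{i}(y)}{f_{j}(y)}>\varepsilon\right]\leq \delta\qquad\text{for every ordered pair } (i,j).
\]

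Using Lemma \ref{lemma:inverse} in its scalar form I would rewrite the log-ratio as
\[
\log\frac{f_{i}(y)}{f_{j}(y)}=\tfrac{1}{2}\log\frac{\overline{P}_{jj}}{\overline{P}_{ii}}+\frac{(y-x_{k})^{2}(\overline{P}_{ii}-\overline{P}_{jj})}{2\overline{P}_{ii}\overline{P}_{jj}},
\]
pass to absolute values (so that both orderings of $\overline{P}_{ii},\overline{P}_{jj}$ are handled simultaneously), and invoke the first-order Taylor bound $\tfrac{1}{2}|\log(\overline{P}_{jj}/\overline{P}_{ii})|\leq \Delta_{2}/(2\overline{P}_{\min})$ together with the worst-case substitutions $|\overline{P}_{ii}-\overline{P}_{jj}|\leq \Delta_{2}$, $\overline{P}_{ii}\overline{P}_{jj}\geq \overline{P}_{\min}^{2}$, and $\overline{P}_{ii}\leq \overline{P}_{\min}+\Delta_{2}$ (the last from the reverse triangle inequality in \eqref{Delta_2_P_min}). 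Requiring the resulting upper bound on $|\log(f_{i}/f_{j})|$ to stay $\leq\varepsilon$ forces the tail event $\{(y-x_{k})^{2}>t^{2}\}$ with an explicit $t^{2}=\overline{P}_{\min}(2\varepsilon\overline{P}_{\min}-\Delta_{2})/\Delta_{2}$.

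Chebyshev's inequality (Lemma \ref{lemma:chebyshev}) then bounds the probability of this tail event by $\overline{P}_{ii}/t^{2}\leq (\overline{P}_{\min}+\Delta_{2})/t^{2}$, and demanding this to be at most $\delta$ yields the quadratic inequality $2\varepsilon\delta\,\overline{P}_{\min}^{2}-(1+\delta)\Delta_{2}\,\overline{P}_{\min}-\Delta_{2}^{2}\geq 0$, whose discriminant is $\Delta_{2}^{2}[(\delta+1)^{2}+8\varepsilon\delta]$. Solving for the positive root and using the majorisation $(1+\delta)+\sqrt{(1+\delta)^{2}+8\varepsilon\delta}\leq 2\sqrt{(1+\delta)^{2}+8\varepsilon\delta}$ recovers precisely the sufficient threshold $\Delta_{2}\sqrt{(\delta+1)^{2}+8\varepsilon\delta}/(2\varepsilon\delta)$ stated in the theorem.

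The main obstacle is the sign-sensitive log-determinant term $\tfrac{1}{2}\log(\overline{P}_{jj}/\overline{P}_{ii})$: its sign flips between the cases $\overline{P}_{ii}>\overline{P}_{jj}$ and $\overline{P}_{ii}<\overline{P}_{jj}$, so it cannot simply be dropped when one insists on a uniform upper bound valid for every ordered pair. Carrying it through via Lemma \ref{lemma:inverse} and bounding it by its linear Taylor surrogate is precisely what injects the $(\delta+1)^{2}$ summand under the square root in the final threshold; everything else is routine scalar algebra on the resulting quadratic.
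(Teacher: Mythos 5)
Your proposal follows essentially the same route as the paper's proof: form the privacy-loss log-ratio of the two Gaussian densities, split it by the triangle inequality into the log-variance-ratio term (bounded by $\Delta_{2}/(2\overline{P}_{\min})$ via $\ln(1+x)\leq x$) and the quadratic term (bounded by $\Delta_{2}(X_{k}-x_{k})^{2}/(2\overline{P}_{\min}^{2})$), apply Chebyshev's inequality to the resulting tail event with threshold $\gamma_{I}=\overline{P}_{\min}(2\varepsilon\overline{P}_{\min}-\Delta_{2})/\Delta_{2}$, and solve the identical quadratic $2\varepsilon\delta\overline{P}_{\min}^{2}-(1+\delta)\Delta_{2}\overline{P}_{\min}-\Delta_{2}^{2}>0$ with the same majorization of the root. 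The only cosmetic omission is that you do not explicitly verify the nonnegativity of the tail threshold $\gamma_{I}$ (i.e., $\overline{P}_{\min}\geq\Delta_{2}/(2\varepsilon)$), which the paper checks and shows is already implied by the stated condition.
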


\begin{proof}
According to the privacy metric of LDP in \eqref{LDP}, we need to construct the following quantity of privacy loss $\mathcal{L}_{I}$ for one-dimensional intrinsic mechanism \cite{Dwork_DP_Book}:
\begin{equation}                       
\begin{aligned}
    \mathcal{L}_{I}
    &=
    \left|\ln\left(\frac{\mathbb{P}(\hat{x}_{i,k}=X_{k})}{\mathbb{P}(\hat{x}_{j,k}=X_{k})}\right)\right|.
\end{aligned}
\end{equation}
Note that $\hat{x}_{i,k}=x_{k}-\tilde{x}_{i,k}$ and the probability distribution function (PDF) of $\tilde{x}_{i,k}$ is $\mathcal{N}(\tilde{x}_{i,k};0,\overline{P}_{ii})$.
By substituting these specific Gaussian PDFs, the above metric can be rewritten as follows:
\begin{equation}                       
\begin{aligned}
\label{L_I}
    \mathcal{L}_{I}
    &=
    \left|\ln\left(\frac{\mathbb{P}(\tilde{x}_{i,k}=X_{k}-x_{k})}{\mathbb{P}(\tilde{x}_{j,k}=X_{k}-x_{k})}\right)\right| \\
    &=
    \left|\ln\left(\frac{\frac{1}{\sqrt{2\pi \overline{P}_{ii}}}\exp\left(\frac{(X_{k}-x_{k})^{2}}{-2\overline{P}_{ii}}\right)}{\frac{1}{\sqrt{2\pi \overline{P}_{jj}}}\exp\left(\frac{(X_{k}-x_{k})^{2}}{-2\overline{P}_{jj}}\right)}\right)\right| \\
    &=
    \left|\ln\sqrt{\frac{\overline{P}_{jj}}{\overline{P}_{ii}}}
    +\frac{(\overline{P}_{jj}-\overline{P}_{ii})(X_{k}-x_{k})^{2}}{-2\overline{P}_{ii}\overline{P}_{jj}}\right|.
\end{aligned}
\end{equation}
Given the demanded privacy budget, this quantity should be constrained such that $\mathcal{L}_{I}\leq\varepsilon$.
Since the desired result cannot be directly derived from the above expression, we need to find the necessary condition by calculating an upper bound for $\mathcal{L}_{I}$ denoted as $\overline{\mathcal{L}}_{I}$.
This is because the target $\mathcal{L}_{I}\leq\varepsilon$ can be achieved if its upper bound $\overline{\mathcal{L}}_{I}$ satisfies
\begin{equation}                       
\begin{aligned}
    \mathcal{L}_{I}\leq\overline{\mathcal{L}}_{I}\leq\varepsilon.
\end{aligned}
\end{equation}
Consequently, we proceed to compute this upper bound as follows.

With subadditivity or triangle inequality for absolute values, the performance loss \eqref{L_I} can be further separated into the following form:
\begin{equation}                       
\begin{aligned}
\label{overline_L_I}
    \mathcal{L}_{I}
    \leq&
    \left|\ln\sqrt{\frac{\overline{P}_{jj}}{\overline{P}_{ii}}}\right|
    +\left|\frac{\overline{P}_{jj}-\overline{P}_{ii}}{-2\overline{P}_{ii}\overline{P}_{jj}}\right|(X_{k}-x_{k})^{2}.
\end{aligned}
\end{equation}
In this case, the loss can be divided into two parts as $\mathcal{L}_{I}\leq\mathcal{L}_{I,1}+\mathcal{L}_{I,2}(X_{k}-x_{k})^{2}$, where $\mathcal{L}_{I,1}=\left|\ln\sqrt{\frac{\overline{P}_{jj}}{\overline{P}_{ii}}}\right|$ and $\mathcal{L}_{I,2}=\left|\frac{\overline{P}_{jj}-\overline{P}_{ii}}{-2\overline{P}_{ii}\overline{P}_{jj}}\right|$.
Then, the problem of calculating $\overline{\mathcal{L}}_{I}$ can be converted to that of calculating respective bounds of $\mathcal{L}_{I,1}$ and $\mathcal{L}_{I,2}$, i.e., finding $\overline{\mathcal{L}_{I,1}}$ and $\overline{\mathcal{L}_{I,2}}$ such that
\begin{equation}                       
\begin{aligned}
\label{overline_L_I1I2}
    &\mathcal{L}_{I}\leq\overline{\mathcal{L}}_{I}=\overline{\mathcal{L}}_{I,1}+\overline{\mathcal{L}}_{I,2}(X_{k}-x_{k})^{2}\leq\varepsilon, \\
    &\mathrm{where}\ \mathcal{L}_{I,1}\leq\overline{\mathcal{L}}_{I,1},\
    \mathcal{L}_{I,2}\leq\overline{\mathcal{L}_{I,2}}.
\end{aligned}
\end{equation}

On the one hand, we discuss the first part $\mathcal{L}_{I,1}$.
Based on the sensitivity in \eqref{Delta_2} in Definition \ref{definition_sensitivity}, the upper bound of $\mathcal{L}_{I,1}$ can be expressed by
\begin{equation}                       
\begin{aligned}
    \mathcal{L}_{I,1}
    =&
    \left|\ln\sqrt{\frac{\overline{P}_{ii}+\Delta_{ij}}{\overline{P}_{ii}}}\right| \\
    =&
    \frac{1}{2}\ln\left(1+\frac{\Delta_{ij}}{\overline{P}_{ii}}\right) \\
    \leq&
    \frac{1}{2}\ln\left(1+\frac{\Delta_{2}}{\overline{P}_{ii}}\right).
\end{aligned}
\end{equation}
By resorting to the inequality $\ln(1+x)\leq x\ (x\geq 0)$, the upper bound can be further described by
\begin{equation}                       
\begin{aligned}
    \mathcal{L}_{I,1}
    \leq
    \frac{\Delta_{2}}{2\overline{P}_{ii}}.
\end{aligned}
\end{equation}
Then, with minimum in \eqref{P_min} and the relation $\overline{P}_{ii}\geq \overline{P}_{\min}\ (\forall\ i)$ in \eqref{Delta_2_P_min}, the final upper bound of the first part can be formulated as
\begin{equation}                       
\begin{aligned}
\label{overline_L_I1}
    \mathcal{L}_{I,1}
    \leq
    \frac{\Delta_{2}}{2\overline{P}_{\min}}
    =
    \overline{\mathcal{L}}_{I,1}.
\end{aligned}
\end{equation}
On the other hand, with the similar relations mentioned above, the upper bound of the second part can be deduced as follows:
\begin{equation}                       
\begin{aligned}
\label{overline_L_I2}
    \mathcal{L}_{I,2}
    =&
    \left|\frac{\overline{P}_{jj}-\overline{P}_{ii}}{-2\overline{P}_{ii}\overline{P}_{jj}}\right| \\
    \leq&
    \left|\frac{-\Delta_{ij}}{-2\overline{P}_{ii}\overline{P}_{jj}}\right| \\
    \leq&
    \frac{\Delta_{2}}{2\overline{P}_{ii}\overline{P}_{jj}} \\
    \leq&
    \frac{\Delta_{2}}{2\overline{P}_{\min}^{2}}
    =
    \overline{\mathcal{L}}_{I,2}.
\end{aligned}
\end{equation}
Hence, by combining the results in \eqref{overline_L_I1} and \eqref{overline_L_I2}, the parameters in \eqref{overline_L_I1I2} are acquired.

It is obviously that the Gaussian distributions cannot achieve pure $\varepsilon$-LDP.
Hence, it is required to ensure the privacy loss bounded by $\varepsilon$ with probability at least $(1-\delta)$ for $(\varepsilon,\delta)$-LDP \cite{Dwork_DP_Book}:
\begin{equation}                       
\begin{aligned}
\label{PL_I>delta}
    \mathbb{P}(\overline{\mathcal{L}}_{I}\geq\varepsilon)\leq\delta.
\end{aligned}
\end{equation}
In this case, by substituting the above upper bounds \eqref{overline_L_I1} and \eqref{overline_L_I2} into the requirement \eqref{PL_I>delta}, the following condition can be obtained:
\begin{equation}                       
\begin{aligned}
\label{PX>gamma_I}
    \mathbb{P}\left((X_{k}-x_{k})^{2}\geq \gamma_{I}\right)<\delta,
\end{aligned}
\end{equation}
where $\gamma_{I}=\frac{\varepsilon-\overline{\mathcal{L}}_{I,1}}{\overline{\mathcal{L}}_{I,2}}$.
Here, by simplifying Chebyshev inequality \eqref{lemma:chebyshev} stated in Lemma \ref{lemma:chebyshev} to the one-dimensional case, the probability for the Gaussian random variable $X_{k}$ with mean $x_{k}$ and variance $\overline{P}_{ii}$ in \eqref{PX>gamma_I} can be constrained as
\begin{equation}                       
\begin{aligned}
\label{Px>gamma_I2}
    \mathbb{P}\left((X_{k}-x_{k})^{2}\geq \gamma_{I}\right)
    \leq
    \frac{\overline{P}_{ii}}{\gamma_{I}}\ (\forall\ i).
\end{aligned}
\end{equation}

In order to satisfy the requirement in \eqref{PX>gamma_I}, the condition in \eqref{Px>gamma_I2} should be set to be constant, i.e., be independent of ``$i$''.
Thus, by using the extremums defined in \eqref{P_min} and the relation in \eqref{Delta_2_P_min}, the above bound \eqref{Px>gamma_I2} can be further relaxed to 
\begin{equation}                       
\begin{aligned}
    \frac{\overline{P}_{ii}}{\gamma_{I}}
    \leq
    \frac{\overline{P}_{\max}}{\gamma_{I}}
    \leq
    \frac{\overline{P}_{\min}+\Delta_{2}}{\gamma_{I}}\ (\forall\ i).
\end{aligned}
\end{equation}
In this case, the following necessary condition for the privacy target \eqref{PX>gamma_I} can be easily derived:
\begin{equation}                       
\begin{aligned}
\label{condition_I1}
    \frac{\overline{P}_{\min}+\Delta_{2}}{\gamma_{I}}
    <
    \delta. \\
\end{aligned}
\end{equation}

By respectively substituting the detailed upper bounds $\overline{\mathcal{L}}_{I,1}$ in \eqref{overline_L_I1} and $\overline{\mathcal{L}}_{I,2}$ in \eqref{overline_L_I2} into $\gamma_{I}=\frac{\varepsilon-\overline{\mathcal{L}}_{I,1}}{\overline{\mathcal{L}}_{I,2}}$ in \eqref{condition_I1}, the relation among the minimum norm, sensitivity, and privacy budgets can be established as
\begin{equation}                       
\begin{aligned}
\label{condition_I2}
    \varepsilon-\frac{\Delta_{2}}{2\overline{P}_{\min}}
    >
    \frac{\Delta_{2}}{2\overline{P}_{\min}^{2}}\times\frac{\overline{P}_{\max}}{\delta}. \\
\end{aligned}
\end{equation}
For brevity, we write it in the quadratic form about $\overline{P}_{\min}$:
\begin{equation}                       
\begin{aligned}
\label{condition_I3}
    2\varepsilon\delta \overline{P}_{\min}^{2}-(\delta+1)\Delta_{2} \overline{P}_{\min}-\Delta_{2}^{2}
    >
    0.
\end{aligned}
\end{equation}
Since $\varepsilon$ and $\delta$ are both greater than $0$, we have the relation $\sqrt{(\delta+1)^{2}\Delta_{2}^{2}+8\varepsilon\delta\Delta_{2}^{2}}>\sqrt{(\delta+1)^{2}\Delta_{2}^{2}}=(\delta+1)\Delta_{2}$.
In this case, by solving the quadratic equation \eqref{condition_I3} with one variable $\overline{P}_{\min}$, the solution can be obtained:
\begin{equation}                       
\begin{aligned}
\label{condition_I4}
    \overline{P}_{\min}
    >
    \frac{(\delta+1)\Delta_{2}+\sqrt{(\delta+1)^{2}\Delta_{2}^{2}+8\varepsilon\delta\Delta_{2}^{2}}}{4\varepsilon\delta}.
\end{aligned}
\end{equation}
In order to ensure clarity and conciseness for the final result, we further expand the term $(\delta+1)\Delta_{2}$ into $\sqrt{(\delta+1)^{2}\Delta_{2}^{2}+8\varepsilon\delta\Delta_{2}^{2}}$.
Then, the condition \eqref{theorem_IM_eq} in the theorem is acquired.

Meanwhile, we must guarantee that  $\gamma_{I}$ is nonnegative because it represents the possibility.
Therefore, another condition that should be simultaneously satisfied is given below:
\begin{equation}                       
\begin{aligned}
\label{condition_gamma}
    \overline{P}_{\min}
    \geq
    \frac{\Delta_{2}}{2\varepsilon}.
\end{aligned}
\end{equation}
Note that $\frac{\sqrt{(\delta+1)^{2}+8\varepsilon\delta}}{\delta}>\frac{\sqrt{\delta^{2}}}{{\delta}}=1$.
In this case, we can intuitively obtain the following inclusion relation:
\begin{equation}                       
\begin{aligned}
    &\{\overline{P}_{\min}\in\mathbb{R}:\overline{P}_{\min}>\frac{\Delta_{2}\sqrt{(\delta+1)^{2}+8\varepsilon\delta}}{2\varepsilon\delta}\} \\
    &\subset
    \{\overline{P}_{\min}\in\mathbb{R}:\overline{P}_{\min}\geq\frac{\Delta_{2}}{2\varepsilon}\}.
\end{aligned}
\end{equation}
This means that the result can be directly described as the inequality in \eqref{theorem_IM_eq} by considering all the conditions.

Finally, we need to prove $(\varepsilon,\delta)$-LDP with all the conditions mentioned above.
Let us divide $\mathbb{R}$ into $\mathbb{R}=\mathcal{R}_{1}\cup\mathcal{R}_{2}$, where $\mathcal{R}_{1}=\{\overline{P}_{\min}\in\mathbb{R}: \overline{P}_{\min}\leq\frac{\Delta_{2}\sqrt{(\delta+1)^{2}+8\varepsilon\delta}}{2\varepsilon\delta}\}$ and $\mathcal{R}_{2}=\{\overline{P}_{\min}\in\mathbb{R}:\overline{P}_{\min}>\frac{\Delta_{2}\sqrt{(\delta+1)^{2}+8\varepsilon\delta}}{2\varepsilon\delta}\}$.
According to \eqref{PL_I>delta}, one has $\mathbb{P}(M(\hat{x}_{i,k})=X_{k}|\overline{P}_{\min}\in\mathcal{R}_{2})\leq\delta$.
Fix any vector $X_{k}\subseteq\mathbb{R}$, the following relations can be acquired \cite{Dwork_DP_Book}:
\begin{equation}                       
\begin{aligned}
\label{privacyrelation}
    \mathbb{P}(M(\hat{x}_{i,k})=X_{k})
    =&
    \mathbb{P}(M(\hat{x}_{i,k})=X_{k}|\overline{P}_{\min}\in\mathcal{R}_{1}) \\
    &+
    \mathbb{P}(M(\hat{x}_{i,k})=X_{k}|\overline{P}_{\min}\in\mathcal{R}_{2}) \\
    \leq&
    \mathbb{P}(M(\hat{x}_{i,k})=X_{k}|\overline{P}_{\min}\in\mathcal{R}_{1})+\delta \\
    \leq&
    e^{\varepsilon}\mathbb{P}(M(\hat{x}_{j,k})=X_{k}|\overline{P}_{\min}\in\mathcal{R}_{1})+\delta \\
    \leq&
    e^{\varepsilon}\mathbb{P}(M(\hat{x}_{j,k})=X_{k})+\delta.
\end{aligned}
\end{equation}
This yields $(\varepsilon,\delta)$-LDP for the intrinsic mechanism in one-dimension case.
Besides, according to Lemma \ref{lemma:post}, the fusion process \eqref{hat_x_f_p} will not affect the performance of LDP, because it is a post process.
This completes the proof.
\end{proof}



Furthermore, when the dimension of system state is higher than $1$, some of the aforementioned inequalities will cease to be valid.
Consequently, by utilizing some other inequalities and identities for vectors or matrices, the intrinsic mechanism extended for high-dimensional system is provided by the following corollary.

\begin{corollary}
\label{corollary_IM_H}
\rm{\textbf{(High-Dimensional Intrinsic Mechanism)}}
When $n_x\geq 1$, for arbitrary privacy budgets $0<\varepsilon<1$ and $0<\delta<1$, the LSEs \eqref{hat_x_i} in MSFE with estimation error covariances in \eqref{P_ii} preserves $(\varepsilon,\delta)$-LDP in \eqref{LDP} when the following condition holds:
\begin{equation}                       
\begin{aligned}
\label{corollary_IM_H_eq}
    \overline{P}_{\min}>\frac{\Delta_{2}\sqrt{(\delta+n_{x})^{2}+8n_{x}\varepsilon\delta}}{2\varepsilon\delta}.
\end{aligned}
\end{equation}
\end{corollary}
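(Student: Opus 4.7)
The plan is to mirror the proof of Theorem \ref{theorem_IM} but with every scalar calculation replaced by its multivariate Gaussian counterpart. First I would write the privacy loss $\mathcal{L}_I$ using the $n_x$-dimensional Gaussian PDF $\mathcal{N}(\tilde x_{i,k};0,\overline{P}_{ii})$, which after expansion yields $\mathcal{L}_I = \left|\tfrac{1}{2}\ln(|\overline{P}_{jj}|/|\overline{P}_{ii}|) + \tfrac{1}{2}(X_k-x_k)^{\mathrm T}(\overline{P}_{jj}^{-1}-\overline{P}_{ii}^{-1})(X_k-x_k)\right|$. Applying the triangle inequality as in \eqref{overline_L_I1I2} separates this into a deterministic log-determinant term $\mathcal{L}_{I,1}$ and a quadratic-form coefficient $\mathcal{L}_{I,2}$ multiplying $\|X_k-x_k\|_2^2$, whose bounds I would then compute separately.

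For the log-determinant term, I would rewrite $|\overline{P}_{jj}|/|\overline{P}_{ii}| = |I_{n_x}+\overline{P}_{ii}^{-1}\Delta_{ij}|$ and apply the matrix extension of $\ln(1+x) \le x$, namely $\ln|I_{n_x}+M|\le \mathrm{tr}(M)$, which is valid because $\overline{P}_{ii}^{-1}\Delta_{ij}$ is similar to the symmetric matrix $\overline{P}_{ii}^{-1/2}\Delta_{ij}\overline{P}_{ii}^{-1/2}$ and therefore has real eigenvalues. Combining this with $\mathrm{tr}(M)\le n_x\|M\|_2$ and the bounds already used in \eqref{overline_L_I1} produces $\overline{\mathcal L}_{I,1} = n_x\Delta_2/(2\overline{P}_{\min})$, which is the one-dimensional bound scaled by $n_x$. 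For the quadratic-form coefficient $\mathcal L_{I,2}$ I would apply Lemma \ref{lemma:inverse} directly to get $\overline{P}_{jj}^{-1} - \overline{P}_{ii}^{-1} = -\overline{P}_{jj}^{-1}\Delta_{ij}\overline{P}_{ii}^{-1}$. Submultiplicativity of the $2$-norm then gives the same bound as in \eqref{overline_L_I2}, namely $\overline{\mathcal L}_{I,2} = \Delta_2/(2\overline{P}_{\min}^2)$, with no dimension factor at this step.

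Next I would invoke Lemma \ref{lemma:chebyshev} in its genuinely multivariate form to obtain $\mathbb{P}(\|X_k-x_k\|_2^2 \ge \gamma_I) \le \mathrm{tr}(\overline{P}_{ii})/\gamma_I \le n_x\overline{P}_{\max}/\gamma_I$, which injects the second factor of $n_x$ into the analysis. Setting $\gamma_I = (\varepsilon-\overline{\mathcal L}_{I,1})/\overline{\mathcal L}_{I,2}$, requiring that this upper bound be less than $\delta$, and substituting $\overline{P}_{\max}\le \overline{P}_{\min}+\Delta_2$ as in \eqref{Delta_2_P_min} produces the quadratic inequality $2\varepsilon\delta\overline{P}_{\min}^2 - (\delta+n_x)\Delta_2\overline{P}_{\min} - n_x\Delta_2^2 > 0$. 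Solving for its larger root and consolidating the linear term into the square root exactly as in \eqref{condition_I4} yields \eqref{corollary_IM_H_eq}. The partition argument over $\mathcal{R}_1\cup\mathcal{R}_2$ in \eqref{privacyrelation} and the post-processing Lemma \ref{lemma:post} for the fusion step carry over verbatim.

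The hard part will be the log-determinant bound. The inequality $\ln|I_{n_x}+\overline{P}_{ii}^{-1}\Delta_{ij}|\le \mathrm{tr}(\overline{P}_{ii}^{-1}\Delta_{ij})$ requires that the eigenvalues of $\overline{P}_{ii}^{-1}\Delta_{ij}$ lie in $(-1,\infty)$, which is presumably why the corollary strengthens the hypothesis from $\varepsilon>0$ to $0<\varepsilon<1$: this ensures the derived lower bound on $\overline{P}_{\min}$ is strong enough to control these eigenvalues through $\|\overline{P}_{ii}^{-1}\Delta_{ij}\|_2 < 1$. The remaining bookkeeping, tracking that exactly one factor of $n_x$ arises from the log-determinant step and one from the multivariate Chebyshev step, and verifying that they combine inside the square root as $(\delta+n_x)^2 + 8n_x\varepsilon\delta$ rather than in some other algebraic form, is routine once the two scalar bounds are in place.
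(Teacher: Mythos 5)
Your overall architecture matches the paper's: decompose the privacy loss into a log-determinant part and a quadratic-form part, bound each, apply the multivariate Chebyshev inequality from Lemma \ref{lemma:chebyshev}, reduce to a quadratic in $\overline{P}_{\min}$, and finish with the partition argument \eqref{privacyrelation} and Lemma \ref{lemma:post}. Your treatment of the quadratic-form coefficient (Lemma \ref{lemma:inverse} plus submultiplicativity, giving $\overline{\mathcal L}_{I,2}=\Delta_2/(2\overline{P}_{\min}^2)$) and of the Chebyshev step (giving the factor $n_x\overline{P}_{\max}$) coincide with the paper's.

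The gap is in the log-determinant term and the bookkeeping that follows from it. You bound $\mathcal L_{I,1}\le\overline{\mathcal L}_{I,1}=n_x\Delta_2/(2\overline{P}_{\min})$, but then assert that the resulting quadratic is $2\varepsilon\delta\overline{P}_{\min}^2-(\delta+n_x)\Delta_2\overline{P}_{\min}-n_x\Delta_2^2>0$. Propagating your own bound honestly through $\gamma_I=(\varepsilon-\overline{\mathcal L}_{I,1})/\overline{\mathcal L}_{I,2}$ and the requirement $n_x(\overline{P}_{\min}+\Delta_2)/\gamma_I<\delta$ gives instead
\begin{equation}
\begin{aligned}
2\varepsilon\delta\,\overline{P}_{\min}^2-n_x(\delta+1)\Delta_2\,\overline{P}_{\min}-n_x\Delta_2^2>0,
\end{aligned}
\end{equation}
whose positive root is strictly larger than the threshold in \eqref{corollary_IM_H_eq} whenever $n_x\ge2$ (e.g.\ $n_x=10$, $\varepsilon=\delta=0.5$: your numerator is $\approx30.7\Delta_2$ versus $\approx22.8\Delta_2$ for the stated condition). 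So your argument proves only a more restrictive sufficient condition, not the corollary as stated. The coefficient $(\delta+n_x)$ in \eqref{condition_I_H4} arises precisely because the paper keeps $\overline{\mathcal L}_{I,1}^{H}$ free of any $n_x$ factor: it bounds the determinant ratio by $\Delta_2^{n_x}/(2\overline{P}_{\min}^{n_x})$ in \eqref{L_I1_H} and then uses the hypothesis $\varepsilon<1$ together with the nonnegativity of $\gamma_I$ (see \eqref{condition_I_H2}) to collapse this back to the one-dimensional bound $\Delta_2/(2\overline{P}_{\min})$, so that the only factor of $n_x$ entering the quadratic comes from the Chebyshev step. This also answers your side question about why the hypothesis is strengthened to $0<\varepsilon<1$: it is not to control the eigenvalues of $\overline{P}_{ii}^{-1}\Delta_{ij}$ for the $\ln|I+M|\le\mathrm{tr}(M)$ inequality, but to force $\Delta_2/(2\overline{P}_{\min})<1$ so that the $n_x$-th power bound on the log-determinant term can be relaxed to the linear one. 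To repair your proof you must either sharpen your estimate of $\mathcal L_{I,1}$ so that it carries no factor of $n_x$ (as the paper does), or accept the weaker conclusion with threshold governed by $n_x(\delta+1)$ in place of $(\delta+n_x)$.
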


\begin{proof}
Similar to Theorem \ref{theorem_IM}, we first need to construct the quantity $\mathcal{L}_{I}^{H}$ of high-dimensional privacy loss in \eqref{L_I_H}.

\begin{strip}
\vspace{0.1cm}
\hrulefill
\centering
\begin{equation}
\begin{aligned}
\label{L_I_H}
    \mathcal{L}_{I}^{H}
    =&
    \left|\ln\left(\frac{\mathbb{P}(\hat{x}_{i,k}=X_{k})}{\mathbb{P}(\hat{x}_{j,k}=X_{k})}\right)\right| \\
    =&
    \left|\ln\left(\frac{\frac{1}{2\pi^{\frac{n_{x}}{2}}|\overline{P}_{ii}|^{\frac{1}{2}}}\exp(-\frac{1}{2}(X_{k}-x_{k})^{\mathrm{T}}\overline{P}_{ii}^{-1}(X_{k}-x_{k}))}
    {\frac{1}{2\pi^{\frac{n_{x}}{2}}|\overline{P}_{jj}|^{\frac{1}{2}}}\exp(-\frac{1}{2}(X_{k}-x_{k})^{\mathrm{T}}\overline{P}_{jj}^{-1}(X_{k}-x_{k}))}\right)\right| \\
    =&
    \left|\ln\frac{|\overline{P}_{jj}|^{\frac{1}{2}}}{|\overline{P}_{ii}|^{\frac{1}{2}}}-\frac{1}{2}(X_{k}-x_{k})^{\mathrm{T}}(\overline{P}_{ii}^{-1}-\overline{P}_{jj}^{-1})(X_{k}-x_{k})\right| \\
\end{aligned}
\end{equation}
\hrulefill
\vspace{0.1cm}
\end{strip}
By applying the triangle inequality, we decompose it into the following form with two additive parts:
\begin{equation}                       
\begin{aligned}
\label{overline_L_H}
    \mathcal{L}_{I}^{H}
    \leq&
    \left|\ln\frac{|\overline{P}_{jj}|^{\frac{1}{2}}}{|\overline{P}_{ii}|^{\frac{1}{2}}}\right| \\
    &+\left|\frac{1}{2}(X_{k}-x_{k})^{\mathrm{T}}(\overline{P}_{ii}^{-1}-\overline{P}_{jj}^{-1})(X_{k}-x_{k})\right|.
\end{aligned}
\end{equation}
Let us respectively define two parts as $\mathcal{L}_{I,1}^{H}=\left|\ln\frac{|\overline{P}_{jj}|^{\frac{1}{2}}}{|\overline{P}_{ii}|^{\frac{1}{2}}}\right|$ and $\mathcal{L}_{I,2}^{H,x}=\left|\frac{1}{2}(X_{k}-x_{k})^{\mathrm{T}}(\overline{P}_{ii}^{-1}-\overline{P}_{jj}^{-1})(X_{k}-x_{k})\right|$.
Here, we also need to deduce the respective upper bounds.
Nevertheless, unlike the one-dimensional expression in \eqref{L_I}, the square of variable $X_{k}$ cannot be readily extracted from the above quadratic form, thereby significantly increasing the complexity.

According to the definition of determinant and $2-$norm, we can derive the following relationship between them: $|A|=\prod_{i=1}^{n_{x}}\lambda_{i}\leq\lambda_{\max}^{n_{x}}=\|A\|_{2}^{2n_{x}}$.
On the basis of this inequality, the first part can be constrained by the following norm-based expression:
\begin{equation}                       
\begin{aligned}
\label{L_I1_H_ln}
    \mathcal{L}_{I,1}^{H}
    \leq&
    \left|\frac{1}{2}\ln\frac{\|\overline{P}_{ij}\|_{2}^{n_{x}}}{\|\overline{P}_{ii}\|_{2}^{n_{x}}}\right|.
\end{aligned}
\end{equation}
Combining triangle inequality and sensitivity in \eqref{Delta_2}, the above bound can be further described by
\begin{equation}                       
\begin{aligned}
\label{L_I1_H}
    \mathcal{L}_{I,1}^{H}
    \leq&
    \frac{\Delta_{2}^{n_{x}}}{2\overline{P}_{\min}^{n_{x}}}
    =
    \overline{\mathcal{L}}_{I,1}^{H}.
\end{aligned}
\end{equation}

Then, since a quadratic form has the following characteristic: $|x^{T}Ax|\leq \|A\|_{2}\|x\|_{2}^{2}$, we can extract the square of the random variable $(X_{k}-x_{k})\in\mathbb{R}^{n_{x}}$ from the second part as
\begin{equation}                       
\begin{aligned}
\label{overline_L_I2_Hx}
    \mathcal{L}_{I,2}^{H,x}
    \leq&
    \frac{1}{2}\|\overline{P}_{ii}^{-1}-\overline{P}_{jj}^{-1}\|_{2}\|X_{k}-x_{k}\|_{2}^{2}.
\end{aligned}
\end{equation}
Since $\overline{P}_{ii}$ and $\overline{P}_{jj}$ are invertible, we can rewrite the above inequality by resorting to the result \eqref{inverse} of Lemma \ref{lemma:inverse} into the following form:
\begin{equation}                       
\begin{aligned}
\label{overline_L_I2_Hx2}
    \mathcal{L}_{I,2}^{H,x}
    \leq&
    \frac{1}{2}\|\overline{P}_{ii}^{-1}(\overline{P}_{jj}-\overline{P}_{ii})\overline{P}_{jj}^{-1}\|_{2}\|X_{k}-x_{k}\|_{2}^{2}.
\end{aligned}
\end{equation}
Further, by applying submultiplicativity or triangle inequality of the $2$-norm for matrices, one has
\begin{equation}                       
\begin{aligned}
\label{overline_L_I2_Hx3}
    \mathcal{L}_{I,2}^{H,x}
    \leq&
    \frac{1}{2}\|\overline{P}_{ii}^{-1}\|_{2}\|-\Delta_{ij}\|_{2}\|\overline{P}_{jj}^{-1}\|_{2}\|X_{k}-x_{k}\|_{2}^{2}.
\end{aligned}
\end{equation}
Meanwhile, since $\overline{P}_{ii}$ is symmetric positive definite, the $2$-norm of the inverse of $\overline{P}_{ii}$ can be bounded as $\|\overline{P}_{ii}^{-1}\|_{2}\leq \frac{1}{\lambda_{\min}(\overline{P}_{ii})}\leq\frac{1}{\overline{P}_{\min}}\ (\forall\ i)$.
Thus, with sensitivity given in \eqref{Delta_2}, the upper bound of the second part can be acquired in the following norm-based form:
\begin{equation}                       
\begin{aligned}
\label{overline_L_I2_Hx4}
    \mathcal{L}_{I,2}^{H,x}
    \leq&
    \frac{\Delta_{2}}{2\overline{P}_{\min}^{2}}\|X_{k}-x_{k}\|_{2}^{2}
    =
    \overline{\mathcal{L}}_{I,2}^{H}\|X_{k}-x_{k}\|_{2}^{2}.
\end{aligned}
\end{equation}

Next, utilizing the privacy budgets specified in \eqref{LDP}, the objective concerning the aforementioned probability distributions can be formulated as follows:
\begin{equation}                       
\begin{aligned}
\label{PX>gamma_I_H}
    \mathbb{P}(\|X_{k}-x_{k}\|_{2}^{2}\geq \gamma_{I}^{H})<\delta,
\end{aligned}
\end{equation}
where $\gamma_{I}^{H}=\frac{\varepsilon-\overline{\mathcal{L}}_{I,1}^{H}}{\overline{\mathcal{L}}_{I,2}^{H}}$.
According to the Chebyshev inequality tailored for high-dimensional Gaussian random vector in Lemma \ref{lemma:chebyshev}, we can derive a probability upper bound for the above event as
\begin{equation}                       
\begin{aligned}
\label{PX>gamma_I_H2}
    \mathbb{P}(\|X_{k}-x_{k}\|_{2}^{2}\geq\gamma_{I}^{H})
    \leq
    \frac{\mathrm{tr}(\overline{P}_{ii})}{\gamma_{I}^{H}}\ (\forall\ i).
\end{aligned}
\end{equation}
Given the following property for the trace of a matrix: $\mathrm{tr}(\overline{P}_{ii})=\sum_{i=1}^{n_{x}}\lambda_{i}\leq n_{x}\overline{P}_{\max}$.
Then, the requirement for guaranteeing \eqref{PX>gamma_I_H} can be described by combining the relation in \eqref{Delta_2_P_min} as follows:
\begin{equation}                       
\begin{aligned}
\label{condition_I_H1}
    \frac{n_{x}(\overline{P}_{\min}+\Delta_{2})}{t}
    <
    \delta.
\end{aligned}
\end{equation}

Meanwhile, since $\gamma_{I}$ is nonnegative and $\varepsilon<1$, we can derive the following relation:
\begin{equation}                       
\begin{aligned}
\label{condition_I_H2}
    \frac{\Delta_{2}^{n_{x}}}{2\overline{P}_{\min}^{n_{x}}}
    \leq
    \frac{\Delta_{2}}{2\overline{P}_{\min}}
    <
    \varepsilon
    <
    1.
\end{aligned}
\end{equation}
Also, based on the above expression, the first constraint is obtained:
\begin{equation}                       
\begin{aligned}
\label{condition_I_H3}
    \overline{P}_{\min}
    \geq
    \frac{\Delta_{2}}{2\varepsilon}.
\end{aligned}
\end{equation}

Similar to one-dimensional case, based on the relation in \eqref{condition_I_H2}, the inequality \eqref{condition_I_H1} can be converted into following quadratic form about $\overline{P}_{\min}$:
\begin{equation}                       
\begin{aligned}
\label{condition_I_H4}
    2\varepsilon\delta \overline{P}_{\min}^{2}-(\delta+n_{x})\Delta_{2} \overline{P}_{\min}-n_{x}\Delta_{2}^{2}
    >
    0.
\end{aligned}
\end{equation}
Obviously, one has $\sqrt{(\delta+n_{x})^{2}+8\varepsilon\delta}>\sqrt{(\delta+n_{x})^{2}}=(\delta+n_{x})$, which yields the following solution:
\begin{equation}                       
\begin{aligned}
\label{condition_I_H5}
    \overline{P}_{\min}
    >
    \frac{(\delta+n_{x})\Delta_{2}+\sqrt{(\delta+n_{x})^{2}\Delta_{2}^{2}+8n_{x}\varepsilon\delta\Delta_{2}^{2}}}{4\varepsilon\delta}.
\end{aligned}
\end{equation}

Finally, the result \eqref{corollary_IM_H_eq} in Corollary \ref{corollary_IM_H} can be acquired by expanding $(\delta+n_{x})\Delta_{2}$ into $\sqrt{(\delta+n_{x})^{2}\Delta_{2}^{2}+8\varepsilon\delta\Delta_{2}^{2}}$.
Meanwhile, because of $\frac{\sqrt{(\delta+n_{x})^{2}+8\varepsilon\delta}}{\delta}>\frac{\sqrt{\delta^{2}}}{{\delta}}=1$, one has
\begin{equation}                       
\begin{aligned}
    &\{\overline{P}_{\min}\in\mathbb{R}:\overline{P}_{\min}>\frac{\Delta_{2}\sqrt{(\delta+n_{x})^{2}+8\varepsilon\delta}}{2\varepsilon\delta}\} \\
    &\subset
    \{\overline{P}_{\min}\in\mathbb{R}:\overline{P}_{\min}\geq\frac{\Delta_{2}}{2\varepsilon}\}.
\end{aligned}
\end{equation}
Hence, the result of merging two sets can be directly described by the inequality \eqref{corollary_IM_H_eq} in the corollary.
After using post processing lemma and privacy relation \eqref{privacyrelation}, the proof is completed.
\end{proof}

For the above differentially private mechanisms with system intrinsic randomness, it is not required to injecting any extra noises.
In this case, we do not need to optimize anything, and the optimal PDFE with intrinsic mechanism is equivalent to the original DFE in \eqref{hat_x_f}.
In other words, the estimation performance will not not suffer any degradation, marking a significant improvement in the intrinsic mechanism proposed in this paper.

\begin{remark}
It is worth noting that the proposed DP mechanism is designed for two random outputs that share the same mathematical expectation but exhibit different covariances.
In contrast, tradition DP focuses on two outputs with distinct expectations, using this difference as a measure of sensitivity.
At the same time, the covariances of the two outputs in traditional DP are typically assumed to be identical, thus simplifying the result derivation.
In this case, there is no doubt that traditional DP neglects the intrinsic randomness in stochastic systems, such as process and measurement noises, which significantly influence the probability distribution of transmitted data.
To address this, the covariance that contains all the system intrinsic randomness is considered for realizing DP in this paper, thus effectively reducing the level of extra perturbation.
Moreover, the aforementioned properties render our proposed mechanisms suitable not only for MSFE but also for all unbiased estimation frameworks.
\end{remark}

\section{Gaussian Mechanism for LDP}
Although the aforementioned intrinsic mechanisms are effective, the requirement \eqref{theorem_IM_eq} may not be satisfied for some scenarios.
In such instances, additional random noises should be further introduced to help realize LDP.
Since the probability distribution of LSE is Gaussian, it is preferable to insert Gaussian noises, because the Gaussian distribution has the property of superposition.

Then, the PLSE can be designed by the following detailed form:
\begin{equation}                       
\label{hat_x_i_p2}
    \hat{x}_{i,k}^{p}=\hat{x}_{i,k}+a_{i,k}.
\end{equation}
Here, $a_{i,k}$ is the extra injected white Gaussian noise with covariance $Q_{a}I_{n_{x}}$, and all the noises are mutually independent, i.e., $\mathbb{E}\{a_{i,k_{1}}a_{j,k_{2}}\}=0\ (\forall\ i\neq j,\ \forall\ k_{1}\neq k_{2})$.
In this case, the estimation error covariance for PLSE can be calculated by $\overline{P}_{ii}^{p}=\overline{P}_{ii}+Q_{a}I_{n_{x}}$.
Based on the above perturbation form, the Gaussian mechanism is designed in the following theorem and the achievement of LDP is also proved.

\begin{theorem}
\label{theorem_GM}
\rm{\textbf{(One-Dimensional Gaussian Mechanism)}}
When $n_x=1$, for arbitrary privacy budgets $\varepsilon>0$ and $0<\delta<1$, the Gaussian mechanism in \eqref{hat_x_i_p2} with parameter
\begin{equation}                       
\begin{aligned}
\label{theorem_GM_eq1}
    Q_{a}
    \geq
    \frac{\zeta\Delta_{2}}{\varepsilon}-\overline{P}_{\min}.
\end{aligned}
\end{equation}
preserves $(\varepsilon,\delta)$-LDP in \eqref{LDP} when the following condition holds:
\begin{equation}                       
\begin{aligned}
\label{theorem_GM_eq2}
    \zeta
    >
    \frac{\sqrt{(\delta+1)^{2}+8\varepsilon\delta}}{2\delta}.
\end{aligned}
\end{equation}
\end{theorem}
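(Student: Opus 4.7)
The plan is to reduce the Gaussian-mechanism statement to Theorem \ref{theorem_IM} by treating the perturbed estimate as a fresh Gaussian variable whose variance is inflated by $Q_{a}$. Because $\hat{x}_{i,k}$ and $a_{i,k}$ are independent Gaussians in the one-dimensional case, we have $\hat{x}_{i,k}^{p}\sim\mathcal{N}(x_{k},\overline{P}_{ii}+Q_{a})$, so the quantities $\overline{P}_{ii}$ and $\overline{P}_{\min}$ appearing in the proof of Theorem \ref{theorem_IM} are simply replaced by $\overline{P}_{ii}+Q_{a}$ and $\overline{P}_{\min}+Q_{a}$. The $l_{2}$-sensitivity, however, is invariant under a common additive shift: $\|\overline{P}_{ii}^{p}-\overline{P}_{jj}^{p}\|_{2}=\|\overline{P}_{ii}-\overline{P}_{jj}\|_{2}\leq\Delta_{2}$, so every $\Delta_{2}$-based bound in Theorem \ref{theorem_IM} survives untouched.

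Next I would transcribe the privacy-loss decomposition in \eqref{overline_L_I}--\eqref{overline_L_I2} under this substitution to obtain $\mathcal{L}_{G}\leq\overline{\mathcal{L}}_{G,1}+\overline{\mathcal{L}}_{G,2}(X_{k}-x_{k})^{2}$ with
$$\overline{\mathcal{L}}_{G,1}=\frac{\Delta_{2}}{2(\overline{P}_{\min}+Q_{a})},\quad \overline{\mathcal{L}}_{G,2}=\frac{\Delta_{2}}{2(\overline{P}_{\min}+Q_{a})^{2}}.$$
Enforcing $\mathbb{P}(\overline{\mathcal{L}}_{G}\geq\varepsilon)\leq\delta$, invoking Chebyshev (Lemma \ref{lemma:chebyshev}) on $(X_{k}-x_{k})^{2}$ whose variance is at most $\overline{P}_{\min}+Q_{a}+\Delta_{2}$, and solving the resulting quadratic in $\overline{P}_{\min}+Q_{a}$ exactly as in \eqref{condition_I3}--\eqref{condition_I4} yields the sufficient condition $\overline{P}_{\min}+Q_{a}>\Delta_{2}\sqrt{(\delta+1)^{2}+8\varepsilon\delta}/(2\varepsilon\delta)$. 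Introducing the dimensionless slack $\zeta$ via $\zeta/\varepsilon\geq\sqrt{(\delta+1)^{2}+8\varepsilon\delta}/(2\varepsilon\delta)$, equivalently \eqref{theorem_GM_eq2}, rearranges this implicit radical bound into the affine lower bound \eqref{theorem_GM_eq1} on the design variable $Q_{a}$.

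Finally I would verify that the ancillary nonnegativity requirement $\gamma_{G}\geq 0$, i.e.\ $\overline{P}_{\min}+Q_{a}\geq\Delta_{2}/(2\varepsilon)$, is automatically implied by the main bound through the set-inclusion argument used at the end of Theorem \ref{theorem_IM}, and close the proof with the probability-splitting identity \eqref{privacyrelation} and Lemma \ref{lemma:post}. The only non-routine step is the reparameterization via $\zeta$: one must check that the strict inequality \eqref{theorem_GM_eq2} makes \eqref{theorem_GM_eq1} strictly imply the implicit radical bound derived from the quadratic, thereby cleanly decoupling the system-intrinsic quantities $(\overline{P}_{\min},\Delta_{2})$ from the privacy budgets $(\varepsilon,\delta)$. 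All remaining ingredients — the triangle-inequality split, Chebyshev, quadratic-root selection, and post-processing — carry over essentially verbatim from Theorem \ref{theorem_IM}.
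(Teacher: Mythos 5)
Your proposal is correct and follows essentially the same route as the paper's proof: both replace $\overline{P}_{ii}$ by $\overline{P}_{ii}+Q_{a}$ in the privacy-loss decomposition (noting the sensitivity $\Delta_{2}$ is unchanged by the common shift), apply the same Chebyshev bound with numerator $\overline{P}_{\min}+\Delta_{2}+Q_{a}$, arrive at the same quadratic (the paper writes it in $\zeta$ as $2\delta\zeta^{2}-(\delta+1)\zeta-\varepsilon>0$, which is your radical bound on $\overline{P}_{\min}+Q_{a}$ after substituting $\overline{P}_{\min}+Q_{a}\geq\zeta\Delta_{2}/\varepsilon$), and close with the same $\gamma_{G}\geq 0$ set-inclusion, probability splitting, and post-processing arguments. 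The only difference is cosmetic: you solve for $\overline{P}_{\min}+Q_{a}$ first and then introduce $\zeta$, whereas the paper substitutes the $Q_{a}$ design first and solves the quadratic directly in $\zeta$.
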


\begin{proof}
Initially, we must also establish the metric $\mathcal{L}_{G}$ for privacy loss like intrinsic mechanisms.
Notice that two Gaussian distributions add up to a new Gaussian distribution.
Consequently, the privacy loss associated with the Gaussian mechanism can be expressed in the following form:
\begin{equation}                       
\begin{aligned}
\label{L_G}
    \mathcal{L}_{G}
    =&
    \left|\ln\left(\frac{\mathbb{P}(M(\hat{x}_{i,k})=X_{k})}{\mathbb{P}(M(\hat{x}_{j,k})=X_{k})}\right)\right| \\
    =&
    \left|\ln\left(\frac{\mathbb{P}(\tilde{x}_{i,k}+a_{i,k}=X_{k}-x_{k})}{\mathbb{P}(\tilde{x}_{j,k}+a_{j,k}=X_{k}-x_{k})}\right)\right| \\
    =&
    \left|\ln\left(\frac{\frac{1}{\sqrt{2\pi \overline{P}_{ii}^{p}}}\exp(\frac{(X_{k}-x_{k})^{2}}{-2\overline{P}_{ii}^{p}})}{\frac{1}{\sqrt{2\pi \overline{P}_{jj}^{p}}}\exp(\frac{(X_{k}-x_{k})^{2}}{-2\overline{P}_{jj}^{p}})}\right)\right| \\
    =&
    \left|\ln\sqrt{\frac{\overline{P}_{jj}^{p}}{\overline{P}_{ii}^{p}}}
    +\frac{(\overline{P}_{jj}^{p}-\overline{P}_{ii}^{p})(X_{k}-x_{k})^{2}}{-2(\overline{P}_{ii}^{p})(\overline{P}_{jj}^{p})}\right| \\
    \leq&
    \left|\ln\sqrt{\frac{\overline{P}_{jj}^{p}}{\overline{P}_{ii}^{p}}}\right|
    +\left|\frac{\overline{P}_{jj}^{p}-\overline{P}_{ii}^{p}}{-2\overline{P}_{ii}^{p}\overline{P}_{jj}^{p}}\right|(X_{k}-x_{k})^{2}.
\end{aligned}
\end{equation}
Similarly, we aim to find the upper bounds $\mathcal{L}_{G,1}\leq\overline{\mathcal{L}}_{G,1}$ and $\mathcal{L}_{G,2}\leq\overline{\mathcal{L}}_{G,2}$ where $\mathcal{L}_{G,1}=\left|\ln\sqrt{\frac{\overline{P}_{jj}^{p}}{\overline{P}_{ii}^{p}}}\right|$ and $\mathcal{L}_{G,2}=\left|\frac{\overline{P}_{jj}^{p}-\overline{P}_{ii}^{p}}{-2\overline{P}_{ii}^{p}\overline{P}_{jj}^{p}}\right|$.
Based on the sensitivity \eqref{Delta_2} and extremum \eqref{P_min}, the upper bound of the first part $\mathcal{L}_{G,1}$ can be easily calculated as
\begin{equation}                       
\begin{aligned}
\label{overline_L_G1}
    \mathcal{L}_{G,1}
    \leq&
    \frac{\Delta_{2}}{2(\overline{P}_{\min}+Q_{a})}
    =
    \overline{\mathcal{L}}_{G,1}.
\end{aligned}
\end{equation}
Analogous to the procedure outlined in \eqref{overline_L_I2}, the second part can be bounded as
\begin{equation}                       
\begin{aligned}
\label{overline_L_G2}
    \mathcal{L}_{G,2}
    \leq&
    \frac{\Delta_{2}}{2(\overline{P}_{\min}+Q_{a})^{2}}
    =\overline{\mathcal{L}}_{G,2}.
\end{aligned}
\end{equation}

Then, to ensure privacy loss bounded by $\varepsilon$ with probability at least $1-\delta$, the following probability condition is required:
\begin{equation}                       
\begin{aligned}
\label{PX>gamma_G}
    \mathbb{P}((X_{k}-x_{k})^{2}\geq \gamma_{G})<\delta,
\end{aligned}
\end{equation}
where $\gamma_{G}=\frac{\varepsilon-\overline{\mathcal{L}}_{G,1}}{\overline{\mathcal{L}}_{G,2}}$.
Meanwhile, with Chebyshev inequality in Lemma \ref{lemma:chebyshev}, one has the following upper bound:
\begin{equation}                       
\begin{aligned}
\label{PX>gamma_G2}
    \mathbb{P}((X_{k}-x_{k})^{2}\geq \gamma_{G})
    \leq
    \frac{\overline{P}_{ii}+Q_{a}}{\gamma_{G}}\ (\forall\ i).
\end{aligned}
\end{equation}
Obviously, the bound \eqref{PX>gamma_G2} can be further expanded as $\frac{\overline{P}_{ii}+Q_{a}}{\gamma_{G}}\leq\frac{\overline{P}_{\max}+Q_{a}}{\gamma_{G}}=\frac{\overline{P}_{\min}+\Delta_{2}+Q_{a}}{\gamma_{G}}$.
By incorporating the requirement \eqref{PX>gamma_G}, the following necessary condition can be derived:
\begin{equation}                       
\begin{aligned}
\label{condition_G1}
    \frac{\overline{P}_{\min}+\Delta_{2}+Q_{a}}{\gamma_{G}}
    <
    \delta.
\end{aligned}
\end{equation}
Providing the existing upper bounds $\overline{\mathcal{L}}_{G,1}$ in \eqref{overline_L_G1} and $\overline{\mathcal{L}}_{G,2}$ in \eqref{overline_L_G2}, we can deduce the following quadratic condition by utilizing the design of $Q_{a}$ specified in \eqref{theorem_GM_eq1}:
\begin{equation}                       
\begin{aligned}
\label{condition_G2}
    2\delta\zeta^{2}-(\delta+1)\zeta-\varepsilon
    >
    0.
\end{aligned}
\end{equation}

Since $\varepsilon$ and $\delta$ are both positive, we can get the relation $\sqrt{(\delta+1)^{2}+8\varepsilon\delta}>(\delta+1)$, and thus the solution to the above inequality can be easily computed as
\begin{equation}                       
\begin{aligned}
\label{condition_G3}
    \zeta
    >
    \frac{\sqrt{(\delta+1)^{2}+8\varepsilon\delta}}{2\delta}.
\end{aligned}
\end{equation}
By expanding $(\delta+1)$ into $\sqrt{(\delta+1)^{2}+8\varepsilon\delta}$, the result \eqref{theorem_GM_eq2} is obtained.
Besides, it is required to guarantee the nonnegativity of $\gamma_{G}$ because it represents the possibility, and thus the following condition should be simultaneously satisfied:
\begin{equation}                       
\begin{aligned}
\label{condition_G4}
    \zeta>\frac{1}{2}.
\end{aligned}
\end{equation}
Notice that $\{\zeta\in\mathbb{R}:\zeta>\frac{1}{2}\}\subset\{\zeta\in\mathbb{R}:\zeta>\frac{\sqrt{(\delta+1)^{2}+8\varepsilon\delta}}{2\delta}\}$ since $\frac{\sqrt{(\delta+1)^{2}+8\varepsilon\delta}}{\delta}>\frac{\sqrt{\delta^{2}}}{\delta}=1$.
Hence, the result \eqref{theorem_GM_eq2} in the theorem holds.
The proof will be completed with post processing in Lemma \ref{lemma:post} and privacy relation \eqref{privacyrelation}.
\end{proof}

Moreover, the extended high-dimensional Gaussian mechanism should be discussed.
By integrating the analogous deduction employed in one-dimensional Gaussian mechanism and high-dimensional intrinsic mechanism, we can straightforwardly derive the following corollary.

\begin{corollary}
\label{corollary_GM_H}
\rm{\textbf{(High-Dimensional Gaussian Mechanism)}}
When $n_x\geq 1$, for arbitrary privacy budgets $0<\varepsilon<1$ and $0<\delta<1$, the Gaussian mechanism in \eqref{hat_x_i_p2} with parameter
\begin{equation}                       
\begin{aligned}
\label{corollary_GM_H_eq1}
    Q_{a}
    \geq
    \frac{\zeta^{H}\Delta_{2}}{\varepsilon}-\overline{P}_{\min}.
\end{aligned}
\end{equation}
preserves $(\varepsilon,\delta)$-LDP in \eqref{LDP} when the following condition holds:
\begin{equation}                       
\begin{aligned}
\label{corollary_GM_H_eq2}
    \zeta^{H}
    >
    \frac{\sqrt{(\delta+n_{x})^{2}+8n_{x}\varepsilon\delta}}{2\delta}.
\end{aligned}
\end{equation}
\end{corollary}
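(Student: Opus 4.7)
The plan is to combine the one-dimensional Gaussian analysis of Theorem \ref{theorem_GM} with the high-dimensional bookkeeping of Corollary \ref{corollary_IM_H}. Because the Gaussian injection $a_{i,k} \sim \mathcal{N}(0, Q_a I_{n_x})$ is independent of the Kalman innovation, the PLSE $\hat{x}_{i,k}^p$ is Gaussian with mean $x_k$ and covariance $\overline{P}_{ii}^p = \overline{P}_{ii} + Q_a I_{n_x}$. The key observation is that passing from $\overline{P}_{ii}$ to $\overline{P}_{ii}^p$ leaves the sensitivity invariant, i.e.\ $\|\overline{P}_{ii}^p - \overline{P}_{jj}^p\|_2 = \|\overline{P}_{ii} - \overline{P}_{jj}\|_2 \le \Delta_2$, while the minimum norm shifts uniformly to $\overline{P}_{\min} + Q_a$. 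This means the entire derivation of Corollary \ref{corollary_IM_H} will carry over verbatim after the substitution $\overline{P}_{\min} \mapsto \overline{P}_{\min} + Q_a$.

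First I would write the high-dimensional privacy loss $\mathcal{L}_G^H$ in the form of \eqref{L_I_H} with the $\overline{P}_{ii}^p,\overline{P}_{jj}^p$ replacing $\overline{P}_{ii},\overline{P}_{jj}$, then split it via the triangle inequality into $\mathcal{L}_{G,1}^H + \mathcal{L}_{G,2}^{H,x}$. Re-running the bounds in \eqref{L_I1_H} and \eqref{overline_L_I2_Hx4} gives $\overline{\mathcal{L}}_{G,1}^H = \Delta_2^{n_x}/\bigl(2(\overline{P}_{\min}+Q_a)^{n_x}\bigr)$ and $\overline{\mathcal{L}}_{G,2}^H = \Delta_2/\bigl(2(\overline{P}_{\min}+Q_a)^2\bigr)$. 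Exactly as in \eqref{PX>gamma_I_H}--\eqref{PX>gamma_I_H2}, Chebyshev's inequality (Lemma \ref{lemma:chebyshev}) applied to $X_k - x_k$ with covariance $\overline{P}_{ii}^p$, together with $\mathrm{tr}(\overline{P}_{ii}^p) \le n_x(\overline{P}_{\min} + \Delta_2 + Q_a)$, reduces the $(\varepsilon,\delta)$-LDP requirement to
\begin{equation*}
    \frac{n_x(\overline{P}_{\min} + \Delta_2 + Q_a)}{\gamma_G^H} < \delta,\qquad \gamma_G^H = \frac{\varepsilon - \overline{\mathcal{L}}_{G,1}^H}{\overline{\mathcal{L}}_{G,2}^H}.
\end{equation*}
Using the same relaxation as in \eqref{condition_I_H2} to replace $\overline{\mathcal{L}}_{G,1}^H$ by the coarser $\Delta_2/(2(\overline{P}_{\min}+Q_a))$, this reorganises to the quadratic
\begin{equation*}
    2\varepsilon\delta(\overline{P}_{\min}+Q_a)^2 - (\delta + n_x)\Delta_2(\overline{P}_{\min}+Q_a) - n_x\Delta_2^2 > 0,
\end{equation*}
which is identical to \eqref{condition_I_H4} after the shift $\overline{P}_{\min} \mapsto \overline{P}_{\min}+Q_a$. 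Solving exactly as in \eqref{condition_I_H5} and then expanding $(\delta+n_x)$ inside the square root yields $\overline{P}_{\min}+Q_a > \Delta_2\sqrt{(\delta+n_x)^2 + 8n_x\varepsilon\delta}/(2\varepsilon\delta)$.

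Finally, substituting the design $Q_a = \zeta^H\Delta_2/\varepsilon - \overline{P}_{\min}$ makes $\overline{P}_{\min}+Q_a = \zeta^H\Delta_2/\varepsilon$, so the above inequality collapses to precisely \eqref{corollary_GM_H_eq2}. The nonnegativity of $\gamma_G^H$ plus the set-inclusion argument $\{\zeta^H > 1/2\} \subset \{\zeta^H > \sqrt{(\delta+n_x)^2+8n_x\varepsilon\delta}/(2\delta)\}$ (since the latter bound exceeds $1$) absorbs the auxiliary constraint, and Lemma \ref{lemma:post} together with the partitioning argument in \eqref{privacyrelation} promotes the loss-based bound to the desired $(\varepsilon,\delta)$-LDP inequality. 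The main obstacle I anticipate is purely algebraic: keeping track of the two different ways $n_x$ enters the derivation (as an exponent via the determinant-norm bound $|A|\le\|A\|_2^{n_x}$, and as a linear factor via the trace bound $\mathrm{tr}(\overline{P}_{ii}^p)\le n_x\|\overline{P}_{ii}^p\|_2$) so that after the relaxation step the terms combine cleanly into the coefficients $(\delta+n_x)$ and $8n_x\varepsilon\delta$ that appear in \eqref{corollary_GM_H_eq2}; the rest is a translation of Corollary \ref{corollary_IM_H}'s proof under the affine shift induced by the Gaussian injection.
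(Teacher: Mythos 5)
Your proposal is correct and follows essentially the same route as the paper's own proof: split the high-dimensional privacy loss via the triangle inequality, bound the two parts using the shifted covariances $\overline{P}_{ii}+Q_aI_{n_x}$, apply Chebyshev's inequality with the trace bound, and reduce to the quadratic $2\delta(\zeta^H)^2-(\delta+n_x)\zeta^H-n_x\varepsilon>0$ after substituting the design of $Q_a$. Your framing of the argument as Corollary \ref{corollary_IM_H} under the affine shift $\overline{P}_{\min}\mapsto\overline{P}_{\min}+Q_a$ is a clean way to organize exactly what the paper does.
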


\begin{proof}
At start, we compute the quantity of privacy loss $\mathcal{L}_{G}^{H}$ as shown in \eqref{L_G_H}.

\begin{strip}
\hrulefill
\vspace{0.1cm}
\centering
\begin{equation}
\begin{aligned}
\label{L_G_H}
    \mathcal{L}_{G}^{H}
    =&
    \left|\ln\left(\frac{\mathbb{P}(M(\hat{x}_{i,k})=X_{k})}{\mathbb{P}(M(\hat{x}_{j,k})=X_{k})}\right)\right| \\
    =&
    \left|\ln\left(\frac{\frac{1}{2\pi^{\frac{n_{x}}{2}}|\overline{P}_{ii}^{p}|^{\frac{1}{2}}}\exp(-\frac{1}{2}(X_{k}-x_{k})^{\mathrm{T}}(\overline{P}_{ii}^{p})^{-1}(X_{k}-x_{k}))}
    {\frac{1}{2\pi^{\frac{n_{x}}{2}}|\overline{P}_{jj}^{p}|^{\frac{1}{2}}}\exp(-\frac{1}{2}(X_{k}-x_{k})^{\mathrm{T}}(\overline{P}_{jj}^{p})^{-1}(X_{k}-x_{k}))}\right)\right| \\
    \leq&
    \left|\ln\frac{|\overline{P}_{jj}^{p}|^{\frac{1}{2}}}{|\overline{P}_{ii}^{p}|^{\frac{1}{2}}}\right|
    +\left|\frac{1}{2}(X_{k}-x_{k})^{\mathrm{T}}((\overline{P}_{ii}^{p})^{-1}-(\overline{P}_{jj}^{p})^{-1})(X_{k}-x_{k})\right|.
\end{aligned}
\end{equation}
\vspace{0.1cm}
\hrulefill
\end{strip}

After dividing it into two parts $\mathcal{L}_{G,1}^{H}=\left|\ln\frac{|\overline{P}_{jj}^{p}|^{\frac{1}{2}}}{|\overline{P}_{ii}^{p}|^{\frac{1}{2}}}\right|$ and $\mathcal{L}_{G,2}^{H}=\left|\frac{1}{2}(X_{k}-x_{k})^{\mathrm{T}}((\overline{P}_{ii}^{p})^{-1}-(\overline{P}_{jj}^{p})^{-1})(X_{k}-x_{k})\right|$, we can respectively acquire the following bounds with the inequalities used in deduction of intrinsic mechanism:
\begin{equation}                       
\begin{aligned}
\label{overline_L_G1_H}
    \mathcal{L}_{G,1}^{H}
    \leq&
    \frac{\Delta_{2}^{n_{x}}}{2(\overline{P}_{min}+Q_{a})^{n_{x}}}
    =
    \overline{\mathcal{L}}_{G,1}^{H}.
\end{aligned}
\end{equation}
and
\begin{equation}                       
\begin{aligned}
\label{overline_L_G2_H}
    \mathcal{L}_{G,2}^{H}
    \leq&
    \frac{\Delta_{2}}{2(\overline{P}_{\min}+Q_{a})^{2}}\|X_{k}-x_{k}\|_{2}^{2}
    =
    \overline{\mathcal{L}}_{G,2}\|X_{k}-x_{k}\|_{2}^{2}.
\end{aligned}
\end{equation}

To guarantee the probability $(1-\delta)$, the following target is required to be achieved:
\begin{equation}                       
\begin{aligned}
\label{PX>t_G}
    \mathbb{P}(\|X_{k}-x_{k}\|_{2}^{2}\geq\gamma_{G}^{H})
    <
    \delta,
\end{aligned}
\end{equation}
where $\gamma_{G}^{H}=\frac{\varepsilon-\overline{\mathcal{L}}_{G,1}^{H}}{\overline{\mathcal{L}}_{G,2}^{H}}$.
Owing to the nonnegativity of $\gamma_{G}^{H}$, the following relation can be derived:
\begin{equation}                       
\begin{aligned}
\label{condition2_G_H1}
    \frac{\Delta_{2}^{n_{x}}}{2(\overline{P}_{\min}+Q_{a})^{n_{x}}}
    <
    \frac{\Delta_{2}}{2(\overline{P}_{\min}+Q_{a})}
    <
    \varepsilon
    <
    1.
\end{aligned}
\end{equation}
By substituting the design \eqref{corollary_GM_H_eq1} of $Q_{a}$, one condition of $\zeta^{H}$ can be acquired as
\begin{equation}                       
\begin{aligned}
\label{condition2_G_H2}
    \zeta^{H}>\frac{1}{2}.
\end{aligned}
\end{equation}

Moreover, with Chebyshev inequality \eqref{chebyshev} for a random vector, one has
\begin{equation}                       
\begin{aligned}
\label{PX>gamma_G_H1}
    \mathbb{P}(\|X_{k}-x_{k}\|_{2}^{2}\geq\gamma_{G})
    \leq
    \frac{\mathrm{tr}(\overline{P}_{\max}+Q_{a})}{\gamma_{G}}.
\end{aligned}
\end{equation}
Since $\frac{n_{x}(\overline{P}_{\max}+Q_{a})}{\gamma_{G}^{H}}\leq\frac{n_{x}(\overline{P}_{\min}+\Delta_{2}+Q_{a})}{\gamma_{G}^{H}}$, the following requirement can be obtained with \eqref{PX>t_G}:
\begin{equation}                       
\begin{aligned}
\label{condition_G_H1}
    \frac{n_{x}(\overline{P}_{\min}+\Delta_{2}+Q_{a})}{\gamma_{G}^{H}}
    <
    \delta.
\end{aligned}
\end{equation}
With the upper bounds $\overline{\mathcal{L}}_{G,1}^{H}$ in \eqref{overline_L_G1_H} and $\overline{\mathcal{L}}_{G,2}^{H}$ in \eqref{overline_L_G2_H}, the following condition is constructed:
\begin{equation}                       
\begin{aligned}
\label{condition_G_H2}
    2\delta(\zeta^{H})^{2}-(\delta+n_{x})\zeta^{H}-n_{x}\varepsilon
    >
    0.
\end{aligned}
\end{equation}

Finally, thanks to the relation $\sqrt{(\delta+n_{x})^{2}+8n_{x}\varepsilon\delta}>(\delta+n_{x})$, the lower bound of $\zeta^{H}$ can be deduced as
\begin{equation}                       
\begin{aligned}
\label{condition_G_H3}
    \zeta^{H}
    >
    \frac{(\delta+n_{x})+\sqrt{(\delta+n_{x})^{2}+8\varepsilon\delta}}{4\delta}.
\end{aligned}
\end{equation}
By amplifying $(\delta+n_{x})$ into $\sqrt{(\delta+n_{x})^{2}+8\varepsilon\delta}$, the simplified result \eqref{corollary_GM_H_eq2} in the corollary is obtained.
Based on the relation $\{\zeta^{H}\in\mathbb{R}:\zeta^{H}>\frac{1}{2}\}\subset\{\zeta^{H}\in\mathbb{R}:\zeta^{H}>\frac{\sqrt{(\delta+n_{x})^{2}+8n_{x}\varepsilon\delta}}{2\delta}\}$, the result \eqref{theorem_GM_eq2} is proved by combining post processing \eqref{postprocessing} and privacy relation \eqref{privacyrelation}.
This completes the proof.
\end{proof}

Notice that the covariances of the LSEs have undergone changes with the above Gaussian mechanisms.
Hence, the optimal weight in \eqref{W} necessitates recalculation.
Given that only the distributions of LSEs are different, the fusion structure can be retained as \eqref{hat_x_f_p}.
In this case, adhering to the linear minimum variance sense, the optimal weight can be obtained as
\begin{equation}                      
\begin{aligned}
\label{W_p}
    W^{p}=(I_{a}^{\mathrm{T}}(\overline{P}^{p})^{-1}I_{a})^{-1}I_{a}^{\mathrm{T}}(\overline{P}^{p})^{-1},
\end{aligned}
\end{equation}
where $\overline{P}^{p}\triangleq (\overline{P}_{ij}^{p})_{nL\times nL}$ with
\begin{equation}                      
\begin{aligned}
\label{overline_P_p}
    \overline{P}_{ij}^{p}=\overline{P}_{ij}+Q_{a}I_{n_{x}}\ (\forall\ i=j),
    \overline{P}_{ij}^{p}=\overline{P}_{ij}\ (\forall\ i\neq j).
\end{aligned}
\end{equation}

\section{Simulation Results}
In this section, we apply for both one-dimensional and high-dimensional systems to illustrate the effectiveness of the proposed methods.
Performance analyses, especially probability distributions and estimation errors, are provided to intuitively show the results.

\subsection{One-Dimensional Case}

\begin{figure}[t]         
    \centering
    \subfigure[The real oxygen contents and the estimated contents of LSEs and DFE.]{
        \includegraphics[width=\columnwidth]{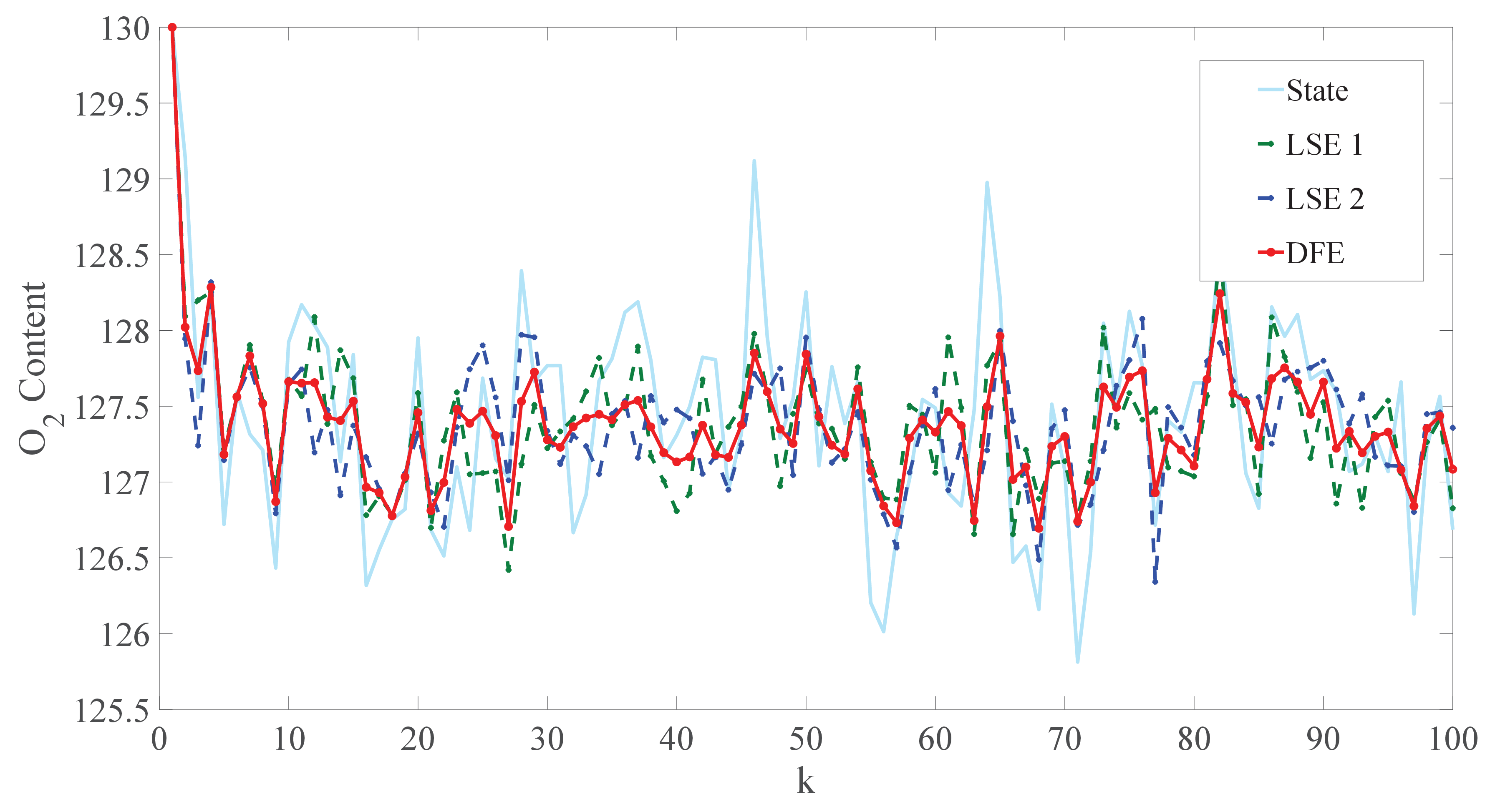}
        \label{Fig_Content_1D}
    }
    \subfigure[The comparison of RMSEs under the intrinsic mechanism.]{
        \includegraphics[width=\columnwidth]{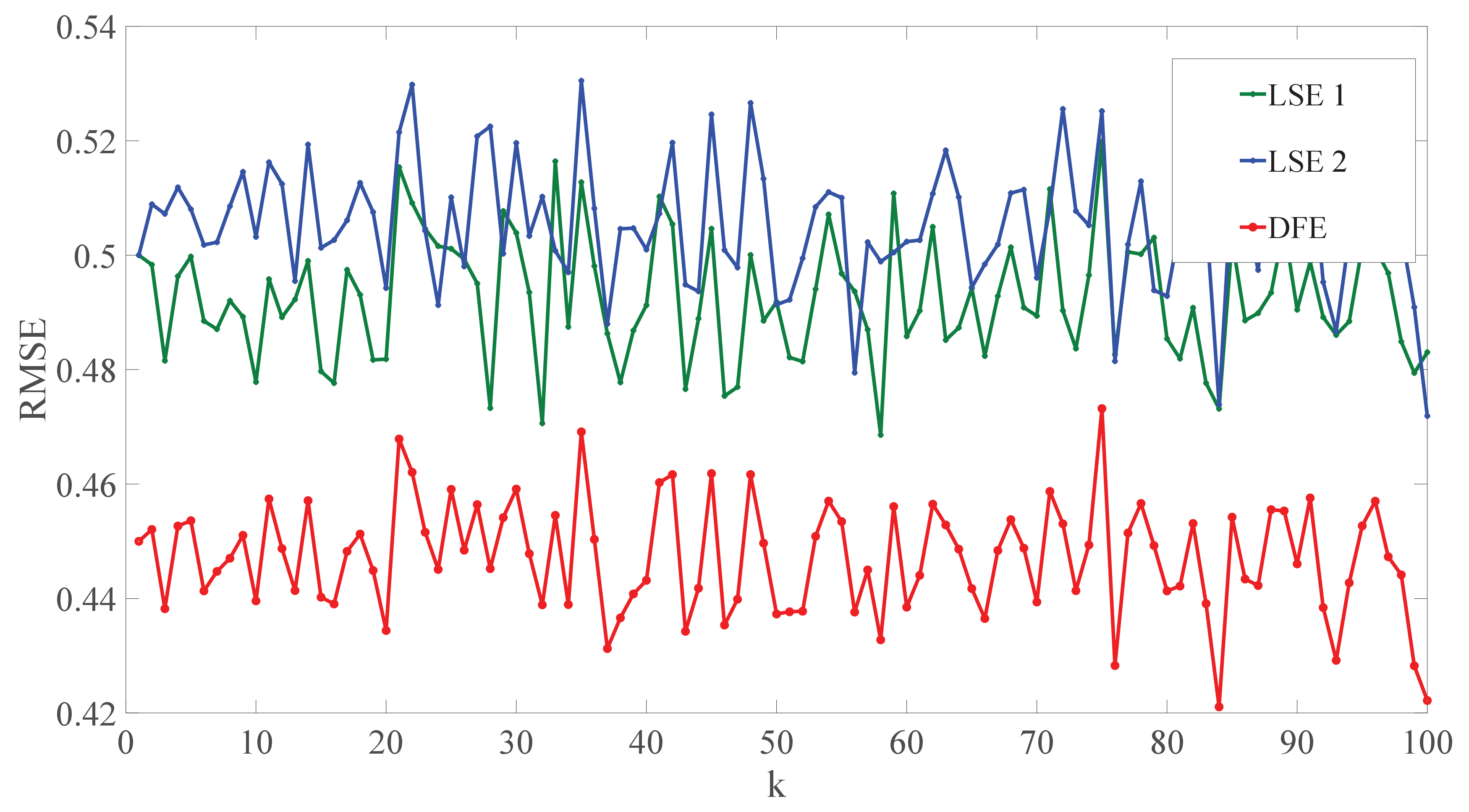}
        \label{Fig_RMSE_1D}
    }
    \caption{The comparison for LSEs and DFE under the intrinsic mechanism.}
\end{figure}

First, let us consider a blood oxygen concentration system for a child during surgery \cite{Ivanov_Oxygen}.
The oxygen content can be described by the following dynamics:
\begin{equation}                      
\begin{aligned}
    a_{k+1}
    =&
    (1-f)(1.34Hb+0.003(c_{1}u_{k})+c_{2}e_{k}) \\
    &+f(a_{k}-\mu)+w_{k},
\end{aligned}
\end{equation}
where $f$ represents the fraction of shunted blood, $c_{1}=P_{ATM}-P_{H_{2}O}$, $u_{k}=(F_{i}O_{2})_{k}$ denotes the fraction of oxygen in inhaled air and is controlled by clinicians, $c_{2}=(1-u_{k}(1-RQ))/RQ$, and $e_{k}=(P_{A}CO_{2})_{k}$ represents the partial pressure of $CO_{2}$ in the alveoli.
Besides, the covariance for noise $w_{k}$ is set as
$Q_{w}=0.4$.
Then, two concentration sensors are deployed to measure the real-time blood oxygen content:
\begin{equation}                      
\begin{aligned}
    y_{1,k}
    =
    a_{k}+v_{1,k},\
    y_{2,k}
    =
    a_{k}+v_{2,k},
\end{aligned}
\end{equation}
where $v_{1,k}$ and $v_{2,k}$ are the measurement noises with the covariance $Q_{v_{1}}=0.6$ and $Q_{v_{2}}=0.7$, respectively.

Here, we define the state as the oxygen content, i.e., $x_{k}\triangleq
a_{k}\in\mathbb{R}$.
Given the parameters $f=0.2$, $Hb=12\mathrm{g/dL}$, $P_{ATM}=760\mathrm{mmHg}$, $P_{H_{2}O}=47\mathrm{mmHg}$, $\mu=5\mathrm{mL/dL}$, $RQ=0.8$.
The, the state-space model in \eqref{StateSpace} can be constructed as
\begin{equation}                      
\begin{aligned}
    &x_{k+1}
    =
    Ax_{k}+w_{k}+U_{k}, \\
    &y_{i,k}
    =
    C_{i}x_{k}+v_{i,k}\ (i=1,2),
\end{aligned}
\end{equation}
where
$U_{k}=(1-f)(1.34Hb+0.003(c_{1}u_{k})+c_{2}e_{k})-f\mu$,
$A=f$,
$B=1$,
$C_{1}=1$,
$C_{2}=1$, and
$D=1$.
With the above model, all the steady-state covariances can be computed and thus we can get the minimum norm in \eqref{P_min} as $\overline{P}_{\min}=0.2435$.
Meanwhile, the privacy budgets of LDP are determined as $\varepsilon=0.8$ and $\delta=0.2$ by the user.
In this case, the following relation can be calculated:
\begin{equation}                       
\begin{aligned}
    \overline{P}_{\min}
    =
    0.2435
    >
    0.0628
    =
    \frac{\Delta_{2}\sqrt{(\delta+1)^{2}+8\varepsilon\delta}}{2\varepsilon\delta},
\end{aligned}
\end{equation}
which satisfies the condition \eqref{theorem_IM_eq} in Theorem \ref{theorem_IM}.
This indicates that the system intrinsic randomness sufficiently contributes to the privacy metric of LDP, and thus obviating the necessity of injecting additional noises.

Due to the intrinsic randomness in CPSs, the estimation performances are assessed by root mean square errors (RMSEs) in this paper, and they are approximated by $1000$ runs of the Monte Carlo method.
The real oxygen contents and the estimated contents of LSEs and DFE are plotted in Fig. \ref{Fig_Content_1D}, where we can see that all the estimates can well track the target state.
Meanwhile, their RMSEs are shown in Fig. \ref{Fig_RMSE_1D}.
It can be realized from this figure that all the estimates are stable because of the bounded RMSE.
Moreover, the fusion estimation performance is better than locals, showing the advancement of the fusion approach.

\begin{figure}[t]         
    \centering
    \includegraphics[width=\columnwidth]{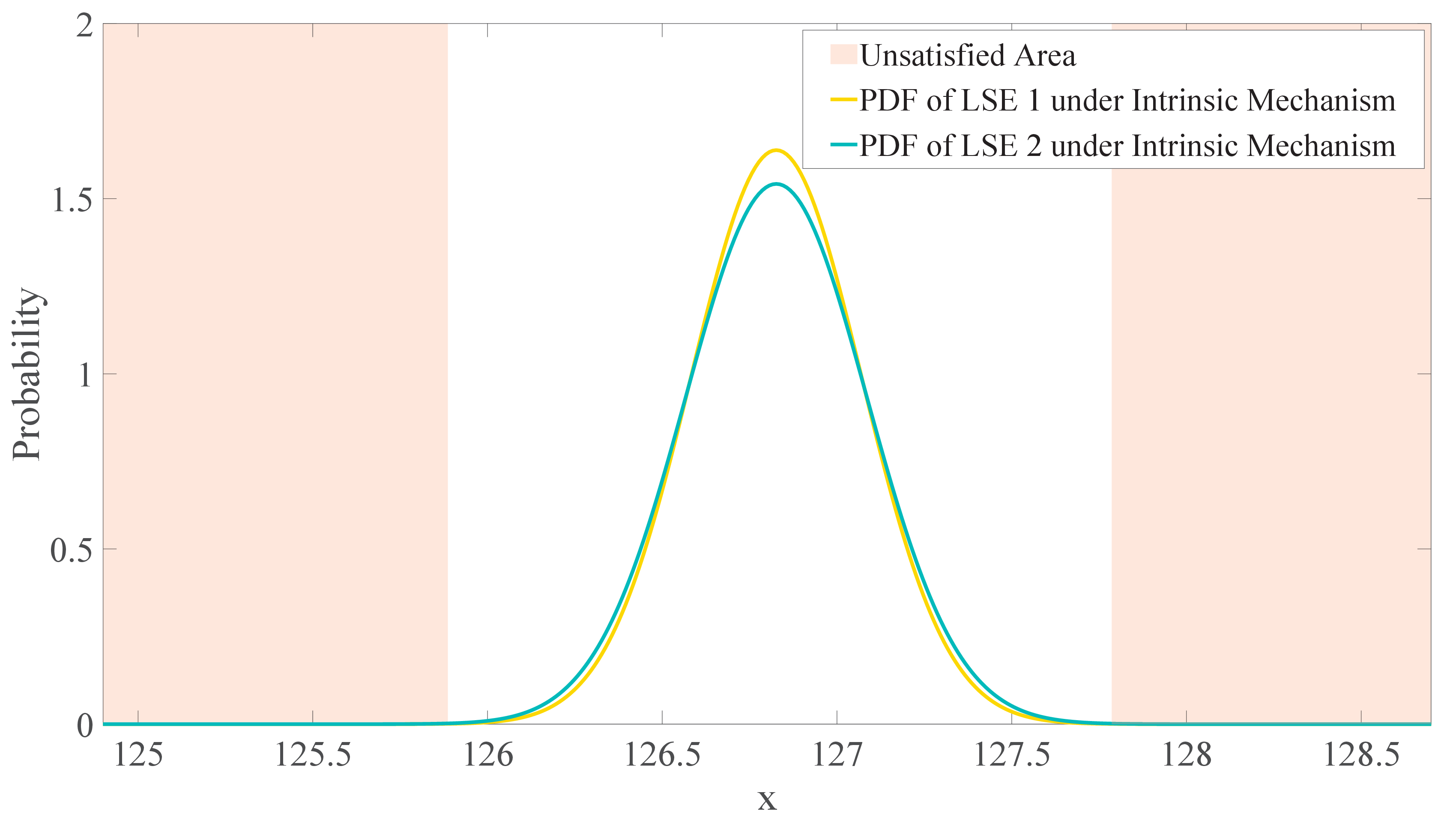}
    \caption{The comparison of the PDFs for two LSEs under the intrinsic mechanism.}
    \label{Fig_PDF_1D}
\end{figure}

Then, based on the proposed intrinsic mechanism, we draw the PDFs of two LSEs in Fig. \ref{Fig_PDF_1D}.
The pink part is the area that the privacy target is not satisfied, denoted as $\mathcal{J}$, which means
\begin{equation}                       
\begin{aligned}
    \mathcal{L}_{I}
    =
    \left|\ln\left(\frac{\mathbb{P}(\hat{x}_{i,k}=X_{k})}{\mathbb{P}(\hat{x}_{j,k}=X_{k})}\right)\right|
    >
    \varepsilon,\ \forall\ X_{k}\in\mathcal{J}.
\end{aligned}
\end{equation}
Intuitively, the unsatisfied region is very small (note that the probability is measured by the total area instead of length).
It is determined by the design of high-level privacy budgets $\varepsilon$ and $\delta$.
By contrast, the target condition $\mathcal{L}_{I}=\left|\ln\left(\frac{\mathbb{P}(\hat{x}_{i,k}=X_{k})}{\mathbb{P}(\hat{x}_{j,k}=X_{k})}\right)\right|\leq\varepsilon$ can be satisfied in the majority of scenario, showing the guarantee of sufficient privacy.

\subsection{High-Dimensional Case}

\begin{figure}[t]         
    \centering
    \subfigure[Original PDFs for LSEs.]{
        \includegraphics[width=\columnwidth]{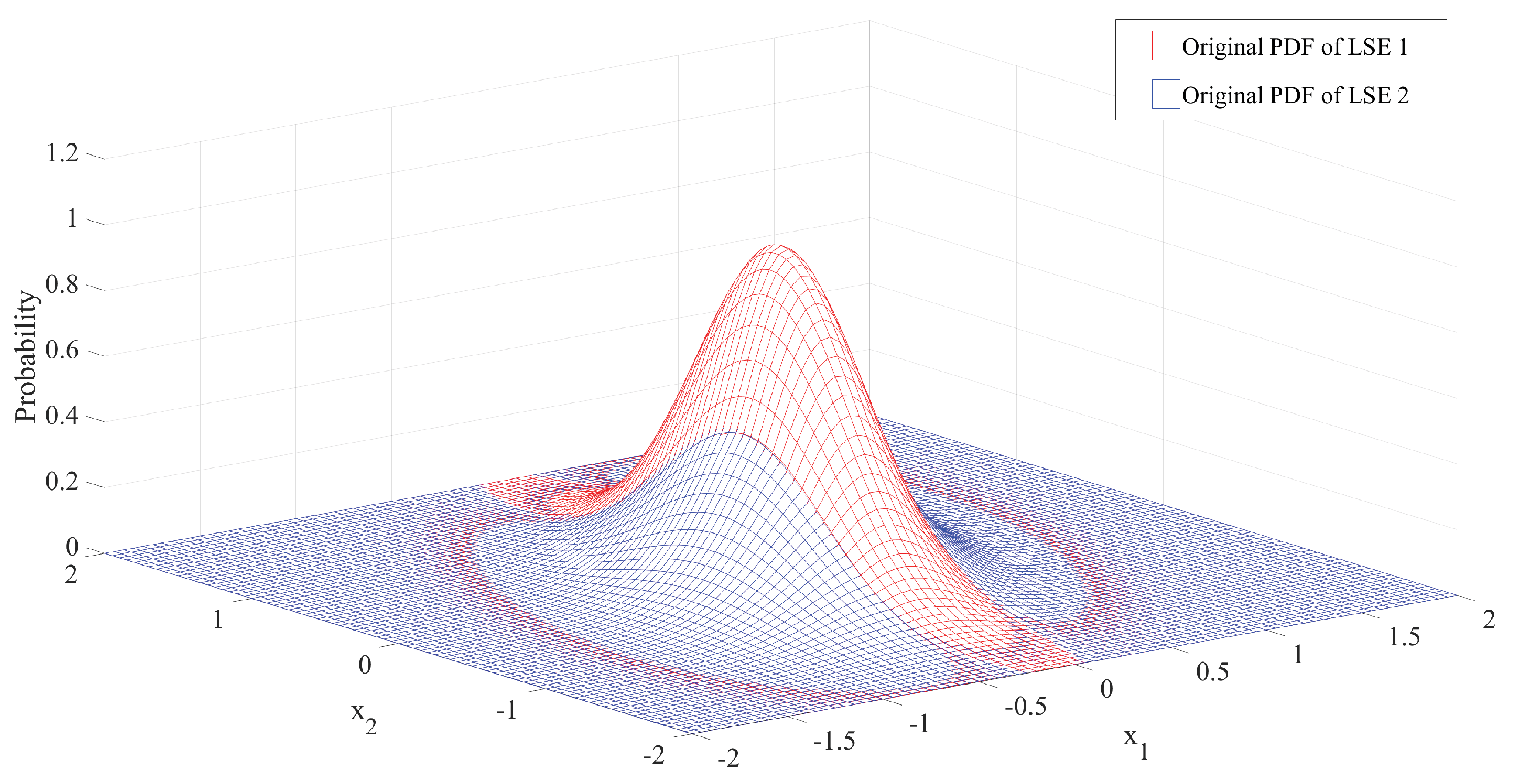}
        \label{Fig_PDF}
    }
    \subfigure[PDFs for PLSEs under Gaussian mechanism.]{
        \includegraphics[width=\columnwidth]{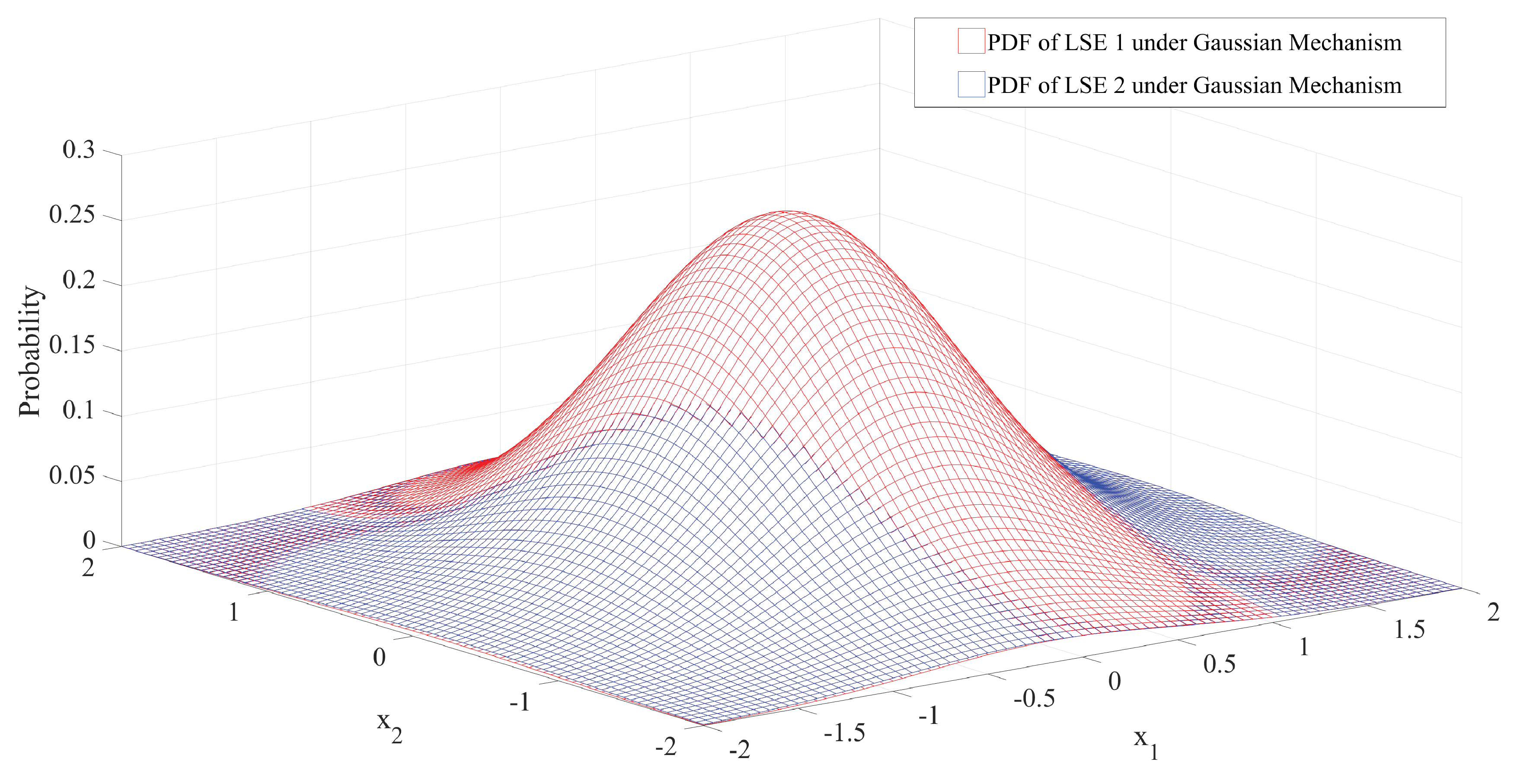}
        \label{Fig_PDF_Gaussian}
    }
    \caption{The comparison of the PDFs for LSEs under the Gaussian mechanism.}
\end{figure}

\begin{figure}[t]         
    \centering
    \includegraphics[width=\columnwidth]{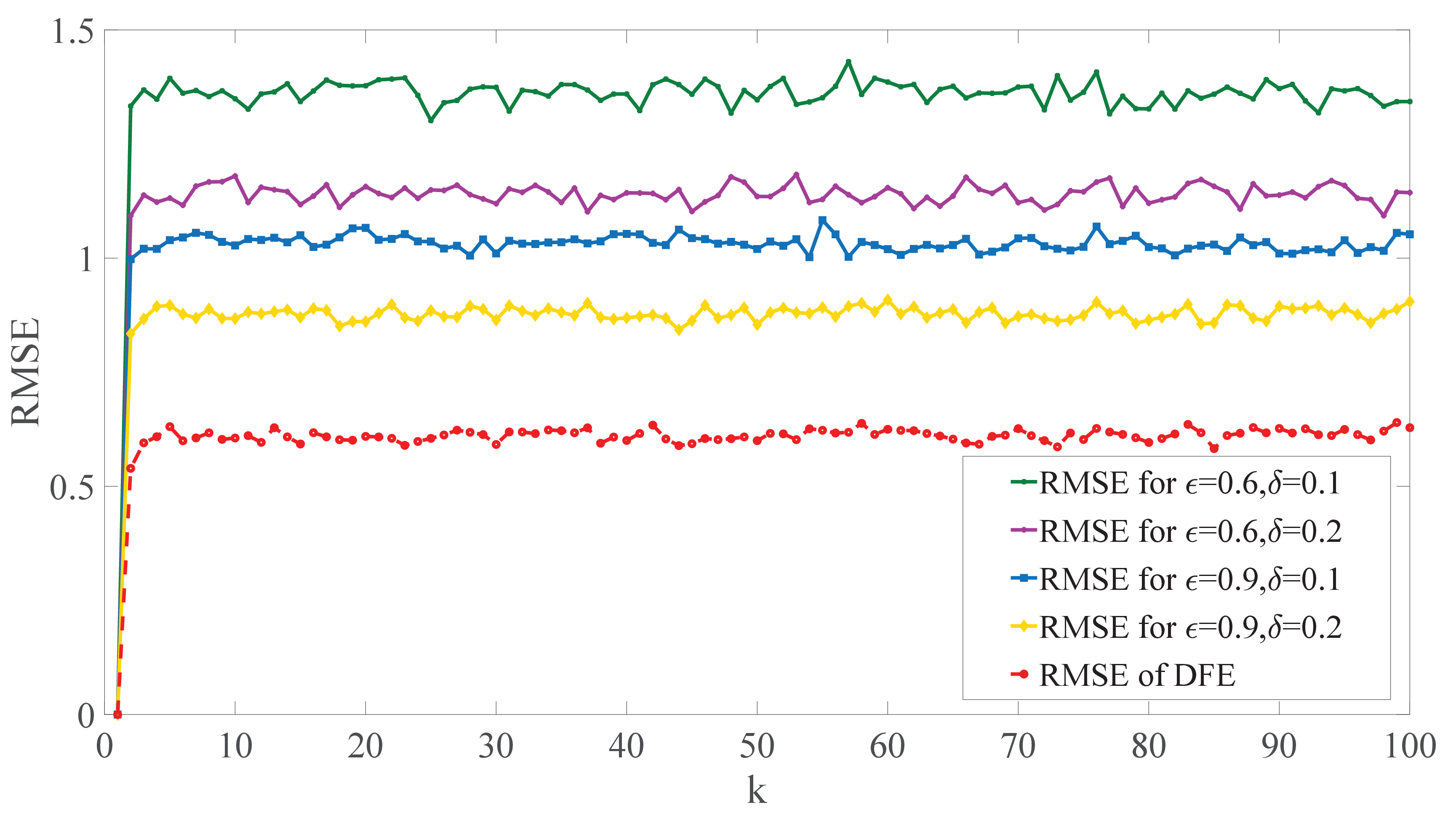}
    \caption{The relations between RMSEs and privacy budgets under the Gaussian mechanism.}
    \label{Fig_RMSE}
\end{figure}

Moreover, a two-dimensional target tracking system observed by two positioning sensors \cite{XinhaoYan_DP_TAES} is proposed to further demonstrate the effectiveness of the high-dimensional Gaussian mechanism.
The state-space model can be established by the following parameters:
$
A=
\begin{bmatrix}
1 & T \\
0 & 1
\end{bmatrix},\ 
B=
\begin{bmatrix}
0.5T^{2} \\
T
\end{bmatrix},\ 
C_{1}=
\begin{bmatrix}
1 & 0
\end{bmatrix},\ 
C_{2}=
\begin{bmatrix}
1 & 0
\end{bmatrix},\ 
D=1
$,
where $T$ is the sampling time that is set as $1$s.
Meanwhile, the covariances of system noises are set as 
$Q_{w}=0.1,\   
Q_{v_{1}}=0.2,\ 
Q_{v_{2}}=0.1$.
Meanwhile, the privacy budgets of LDP are determined as $\varepsilon=0.9$ and $\delta=0.2$.
In this case, the condition of high-dimension intrinsic mechanism in Corollary \ref{corollary_IM_H} is not satisfied, and thus we compute the lower bound of the perturbation as $Q_{a}=0.3950$ based on the high-dimensional Gaussian mechanism in Corollary \ref{corollary_GM_H}.

First, the two-dimensional PDFs are plotted in Fig. \ref{Fig_PDF}.
We can see that the covariances for two PDFs increase with Gaussian mechanism, and meanwhile, they become closer when compared to their original values such that the condition in Corollary \ref{corollary_GM_H} can be satisfied.
Also, the final estimation performances for PDFEs are assessed by resorting to RMSEs as shown in Fig. \ref{Fig_RMSE}.
Furthermore, we discuss the relation between the RMSE and privacy budgets by computing RMSEs under different parameters.
The relations between RMSEs and privacy budgets are also demonstrated in the Fig. \ref{Fig_RMSE}.
Specifically, the detailed cases are $\varepsilon=0.6,\delta=0,1$, $\varepsilon=0.6,\delta=0,2$, $\varepsilon=0.9,\delta=0,1$, and $\varepsilon=0.9,\delta=0,2$.
In this case, we can intuitively realize the relation by observing two of them, i.e., the RMSE decreases with the increase of privacy budgets $\varepsilon$ and $\delta$, while the rate with respect to $\delta$ is a bit larger.

\section{Conclusion}
This paper investigated a privacy-preserving MSFE method leveraging LDP.
We introduced the following two novel definitions about LDP: 
1) The sensitivity incorporating estimation error covariances of LSEs;
2) The privacy metric about the probability distributions related to LSEs.
Then, the achievement of LDP was demonstrated under the condition of system intrinsic randomness, particularly considering estimation error covariances in this paper, encompassing both one-dimensional and high-dimensional scenarios.
Furthermore, we proved the realization of the Gaussian mechanisms when the intrinsic condition was not satisfied, i.e., determined the lower bound for the covariance of additionally injected Gaussian noises.
Moreover, the optimal fusion estimators were designed in the linear minimum variance sense under all the intrinsic and Gaussian mechanisms.
Finally, the effectiveness of the proposed approaches was validated through two numerical simulations, including a one-dimensional oxygen content system and a high-dimensional target tracking system.






\vfill

\end{document}